\newtheorem{theorem}{Theorem}
\newtheorem{proof}{Proof}
\newtheorem{definition}{Definition}
\newtheorem{lemma}{Lemma}
\newtheorem{lproof}{Proof}
\begin{document}


\title{The Null and Force-Free Electromagnetic Field}

\author{Govind Menon}
\affiliation{Department of Chemistry and Physics\\ Troy University, Troy, Al 36082}

\date{\today}

\begin{abstract}
This paper describes the electrodynamics of a null and force-free field in completely geometric terms. As was previously established in \cite{Menon_FF20}, solutions to force-free electrodynamics are governed by the existence of certain special types of foliations of spacetime. Here the nature of the foliations in a coordinate-free formalism in the null case is prescribed. All of the general results are illustrated by constructing a null, force-free electrodynamic field in a Friedmann-Lemaitre-Robertson- Walker (FLRW) spacetime.
\end{abstract}

\pacs{Valid PACS appear here}
\maketitle


   \section{Introduction}
   Force-free electrodynamics (FFE) has found applications in a wide range of astrophysical phenomena ever since Blandford and Znajek published their seminal article describing the powering mechanism of black holes (\cite{BZ77}). FFE is relevant when the energy density of the interacting plasma is insignificant compared to the electromagnetic field density so that the transfer of energy between the two can be effectively ignored. Most of the effort in constructing solutions to the relevant equations are numerical in nature (for example see \cite{K04}, \cite{RMTASS}, \cite{Koide_2019} and \cite{Qian_2018}). Dynamic evolution of the electromagnetic field in varying astrophysical settings can be non-trivial, and numerical solutions will continue to play a primary role in applications. However, a theoretical framework can provide physical meaning to the equations and guide the search for new solutions.
   
   As shown in \cite{Menon_FF20}, FFE is completely determined by gravity (geometry) alone. In the non-null case, (i.e., when the field is electrically or magnetically dominated), picking initial conditions is restricted to a choice of an integration constant in a suitable coordinate system. In the null case, there will always be a class of solutions that has exactly two degrees of freedom. The article will focus only on the null solutions of the electromagnetic field. The degrees of freedom in the null case will be clarified. Further, it will shown that the associated foliations/field sheets will always contain a null geodesic. And finally, a precise relationship between the null expansion scalar of the congruence and the field sheets will also be established.
   
   Early research on force-free electromagnetic theory showed that the field tensor can be written as a simple 2-form, and further that the kernel of the field forms an involutive distribution (\cite{Carter79}, \cite{Uchida1}, \cite{Uchida2}). Over a decade ago, the first exact analytical solution to the Blandford-Znajek equations was obtained using a 3+1 decomposition of the electromagnetic field (\cite{MD07}). This solution described a null, stationary, axis-symmetric field in a Kerr background. The same technique led to a second solution, however this time, the solution was magnetically dominated (\cite{Menon15}). Using the Newmann-Penrose formalism, Brennan et. al. were able to generalize the original null solution to a non-axis-symmetric and time-dependent case (\cite{BGJ13}). In this paper, it will be clarified why such a generalization necessarily exists, and why no further generalization of a null field generated by the outgoing principal null geodesic of the Kerr geometry exists. 
   
   This work begins with a geometric recasting of the recent paper, \cite{Menon_FF20}, that describes the structural aspects of FFE in albeit an adapted coordinate chart. The adapted chart has the advantage of simplifying the relevant partial differential equations in a way that one can study initial data surfaces and the evolution of the field. It is also clear that null and non-null fields have very different characteristics. The adapted chart formalism in \cite{Menon_FF20} partially masked the inherent geometric nature of the theory. The geometric formulation in the null case will connect FFE to null geodesic congruences and its associated null mean curvature. All the central results derived in this paper will be illustrated by constructing a new, non-trivial, null, force-free solution in a $k=+1$ FLRW background.
\section{The Force-Free Electromagnetic field}
As per general relativity, spacetime is a 4-dimensional smooth manifold ${\cal M}$ endowed with a metric $g$ of Lorentz signature which we choose as $(-1, 1,1,1)$. In this work, the metric is fixed and predetermined. The only restriction is that the background metric is free of any electromagnetic contribution. The electromagnetic field tensor can be written as a $2$-form $F$ which satisfies the following Maxwell's equations:
\begin{equation}
 d  F = 0\;,
 \label{Fclosed}
\end{equation}
 and
\begin{equation}
* \;d * F = j\;.
\label{inhomMaxform}
\end{equation}
Here $*$ is the Hodge-Star operator and $d$ is the exterior derivatives on forms. Also, $j$ denotes the current density vector. Force-free electrodynamics is a restricted case where we place the following additional requirement
$$i_{j^\sharp}\; F =0\;.$$
Here $i$ denotes the interior product defined by
$$i_{j^\sharp}\; F \equiv F(j^\sharp, \cdot )\;,$$
where $j^\sharp$ is the contravariant vector field given by the map 
$$j^\sharp = g^{\mu\nu} \;j_\nu \;\partial_\mu$$
in any coordinate basis. The $\sharp$ operator can be used to convert any 1-form to a tangent vector. Its inverse will be denoted by the $\flat$ operator which maps tangent vectors to 1-forms; for example given a vector field $\chi = \chi^\mu \;\partial_\mu$, define
$$\chi^\flat \equiv g_{\mu\nu} \chi^\nu dx^\mu\;.$$
The above expression is also valid in any local chart. 
When using the abstract index notation the $\flat$ and $\sharp$ superscripts will be suppressed, i.e., $j^\mu$ will be understood to be $j^\sharp$, and $\chi_\mu$ is to be taken as $\chi^\flat$. For any $p\in {\cal M}$, let
$$F^2 (p)\equiv F_{\mu\nu} F^{\mu\nu} (p)\;.$$
Then
$F$ is said to be magnetically dominated at $p$ whenever $F^2 (p)>0$, $F$ is  electrically dominated at $p$ whenever $F^2 (p)<0$, finally a  force-free electromagnetic field $F$ is  null at $p$ whenever $F^2 (p)=0$. The 3 current $J \equiv d*F$. Then  $*J=j$.

\vskip0.2in
Earlier works by  \cite{Carter79}, \cite{Uchida1}, and \cite{Uchida2} shows that a force-free electromagnetic field can always be written as a simple 2-form:
\begin{equation}
    F= \alpha \wedge \beta\;.
    \label{simF}
\end{equation}
Here $\alpha$ and $\beta$ are 1-forms. The recent paper by \cite{GT14} explains all the essential equations of FFE listed below. The kernel of $F$, denoted by $\ker F$, is a 2-dimensional subspace of the tangent bundle satisfying the property that $i_v F =0$ whenever $v \in \ker F $.
For the force-free case, using eq.(\ref{simF}) it is easily shown that $\ker F$ at $p$ is spacelike/Lorentz whenever $F$ is electrically/magnetically dominated. When a force-free $F$ is null the metric when restricted  to $\ker F$ is degenerate. For $v, w \in \ker F$,
$$i_{[v,w]} F= [{\cal L}_v, i_w] F =0$$
since $F$ is a closed 2-form. Here ${\cal L}_v$ denotes the Lie derivative with respect to the vector field $v$.
Therefore $\ker F$ is an involutive distribution, meaning that whenever vector fields $v,w \in \ker F$, then $[v,w] \in \ker F$. Consequently, Frobenius' theorem implies that when a force-free $F$ exists on ${\cal M}$, spacetime can be foliated by 2-dimensional integral submanifolds of the distribution spanned by $\ker F$  (\cite{JLee13} covers distributions, Foliations, and Frobenuis' theorem in a very readable manner). The leaves of the foliation, which are the integral submanifolds of $\ker F$,  will be denoted as $ {\cal F}_a$. Here $a$ belongs to some indexing set $A$. The key points here are that
$$ {\cal F}_a \cap {\cal F}_b =0, \;{\rm whenever}\;a \neq b \in A\;,$$
$$ \cup_{a\in A} \;{\cal F}_a ={\cal M}\;,$$
and whenever $v\in T({\cal F}_a) $ for any $a \in A$ we have that $i_v F=0$. When $F$ is written in the form given by eq.(\ref{simF}), the force-free condition is equivalent to the prescription
\begin{equation}
    J \wedge \alpha = 0 =J \wedge \beta\;.
    \label{FFwedgecond}
\end{equation}
Following Gralla and Jacobson, ${\cal F}_a$ will be referred to as field sheets.
\section{The Force-Free Field in an Adapted Chart}
In this section, the relevant results in \cite{Menon_FF20} are presented to maintain continuity of discussion. About any $p$ in some ${\cal F}_a$, there exist an adapted coordinate chart $\big(U_p, \phi_p= (x^1, \dots, x^4)\big)$ centered about $p$, i.e.,
$\phi_p (p) = 0\;,$ such that the slices given by constant values of 
$x^3$ and $x^4$ are indeed the field sheets contained in $U_p$. Consequently, as shown in \cite{GT14}, the Maxwell field tensor can be written as
\begin{equation}
F= u(x^3, x^4) \;dx^3 \wedge dx^4\;.
\label{finalformF}
\end{equation}
Any such chart $\big(U_p, \phi_p= (x^1, \dots, x^4)\big)$ with the above mentioned properties is referred to as a {\bf field sheet adapted chart} for $F$. As previously mentioned in \cite{Menon_FF20}, there is no preference here for a timelike coordinate, and so adapted coordinates are labelled with indices ranging from $1-4$, rather than the usual $0-3$. In the adapted chart, define quantities 
$$M^r= g^{r 3} \;g^{3 4}- g^{3 3} \;g^{r 4}\;,\; {\rm and}\;\;N^r= g^{r 3} \;g^{4 4}- g^{3 4} \;g^{r 4}\;,$$
for $r=1-4$.
It was shown in \cite{Menon_FF20} that the equations of FFE are then given by
\begin{equation}
  M^4 \;\frac{\partial}{\partial{x^4}} \ln |u| = - \frac{1}{\sqrt{-g}}\; \frac{\partial}{\partial{x^r}} \left(\sqrt{- g}\; M^r\right)\equiv -\nabla_r M^r\;, 
  \label{FFEE1}
\end{equation}
and
\begin{equation}
   N^3 \;\frac{\partial}{\partial{x^3}} \ln |u| =  -\frac{1}{\sqrt{-g}}\; \frac{\partial}{\partial{x^r}} \left(\sqrt{- g}\; N^r\right)\equiv- \nabla_r N^r\;. 
   \label{FFEE2}
\end{equation}
Note that $ M^4= -N^3$, and $M^3=0=N^4$. In the last two equations, we have pretended that $M^r$ and $N^r$ are vector fields in writing the shorthand notation for divergence. For the null case, this assumption will be justified along the way.   
\vskip0.2in
Let $F$ be a null electromagnetic field on ${\cal M}$, and let ${\cal F}_a$ be the associated field sheets. Then, as mentioned in the previous section, $g$ restricted to the  tangent bundle of ${\cal F}_a$ denoted by $T({\cal F}_a)$ is degenerate. To see how this happens, consider the plane spanned by $\alpha$ and $\beta$. By a judicious reassignment if necessary, pick $\alpha \perp \beta$ in the sense that $g(\alpha, \beta)=0.$ Since,
$$F^2=2 \alpha^2 \beta^2 =0\;,$$
either $\alpha$ or $\beta$ must be a null vector. Without loss of generality, set $\beta^2=0$, and thus $\alpha$ is spacelike, and $\ker F$ consists of all vectors orthogonal to $\alpha$ and $\beta$. In particular $\beta^\sharp \in \ker F$. Rename $\beta^\sharp$ as $l$ to indicate that it is a lightlike vector. Moreover
$$g(l,v)= \beta (v)=0, \forall v \in T({\cal F}_a)\;.$$
Hence, $g$ when restricted to $T({\cal F}_a)$ is degenerate. The exact same argument shows that $g$ when restricted to the plane spanned by $\alpha$ and $\beta (= l^\flat)$ is degenerate. In the adapted chart, this means that 
\begin{equation}
   \det \left(
    \begin{array}{cc}
      g^{33} & g^{34} \\
      g^{43} & g^{44} \\
    \end{array}
  \right) =0\;, 
  \label{detkerform} 
\end{equation}
or equivalently $M^4 = 0=N^3$. Henceforth,  foliations by 2-dimensional submanifolds of ${\cal M}$, where the restriction of $g$ on the leaves of the foliation is degenerate, will be referred to as a  {\bf null foliation}. 
Given a null foliation, there may or may not be an associated force-free null field. However, foliation adapted charts, $\big(U_p, \phi_p= (x^1, \dots, x^4)\big)$, such that surfaces of constant values for $x^3$ and $x^4$ that agree with the leaves of the foliation continue to exist. Such charts are referred to as a {\bf null foliation adapted chart}. The following theorem was the first insight that led to the understanding of null force-free solutions in an arbitrary spacetime (\cite{Menon_FF20}).

\begin{theorem} {\rm({\bf Version} 1)}
Let ${\cal F}$ be a null foliation of ${\cal M}$ with metric $g$.  Let $\big(U_p, \phi_p= (x^1, \dots, x^4)\big)$ be a null foliation adapted chart about any arbitrary point $p \in {\cal M}$. Then $F$ given by eq.(\ref{finalformF}) for any smooth function $u(x^3,x^4)$ is a unique class of force-free, null solution in $U_p$  such that $\ker F$ contains exactly the vectors in $T({\cal F}_a)$ if and only if
\begin{equation}
  \nabla_r M^r = 0 = \nabla_r N^r\;.  
  \label{divfreeFFeq}
\end{equation}

\label{existuninull}
\end{theorem}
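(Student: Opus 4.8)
The plan is to argue entirely inside the given null foliation adapted chart and to lean on the fact, recalled above, that equations (\ref{FFEE1}) and (\ref{FFEE2}) are equivalent to the force-free condition for any field written in the form (\ref{finalformF}). I would first show that this form is forced. If $F$ is any field whose kernel is exactly $T({\cal F}_a) = {\rm span}(\partial_1,\partial_2)$, then $i_{\partial_1}F = i_{\partial_2}F = 0$ annihilates every component $F_{1\nu}$ and $F_{2\nu}$, leaving $F = F_{34}\,dx^3\wedge dx^4$; imposing $dF=0$ then forces $\partial_1 F_{34} = \partial_2 F_{34} = 0$, i.e. $F_{34} = u(x^3,x^4)$, which is precisely (\ref{finalformF}). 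Thus the candidate class is the only one with the prescribed kernel, and it remains to decide for which metrics its members are force-free and null.

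Next I would dispatch the properties that hold automatically. Closure $dF=0$ is immediate because $u$ depends on $x^3,x^4$ alone. The lone independent component $F_{34}=u$ gives $F^2 = 2F_{34}F^{34} = 2u^2\big(g^{33}g^{44}-(g^{34})^2\big)$, which vanishes by the null foliation condition (\ref{detkerform}); hence every member of the class is null on $U_p$. The same determinant shows $M^4 = -N^3 = -\big(g^{33}g^{44}-(g^{34})^2\big) = 0$, while $M^3 = N^4 = 0$ hold identically. The decisive consequence is that the left-hand sides of both (\ref{FFEE1}) and (\ref{FFEE2}) vanish identically, \emph{independently of} $u$.

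The core of the theorem is then immediate: with $M^4=N^3=0$, equations (\ref{FFEE1})--(\ref{FFEE2}) collapse to $\nabla_r M^r = 0 = \nabla_r N^r$. Since $M^r$ and $N^r$ are assembled purely from metric components of the chart, this pair carries no dependence on $u$. Hence either every smooth $u(x^3,x^4)$ produces a force-free solution, precisely when (\ref{divfreeFFeq}) holds, or none does when it fails. That is exactly the claimed equivalence, and it also explains why the null case retains a full functional freedom in $u$ rather than the single integration constant of the non-null case.

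The step I expect to demand the most care is the claim, flagged just after (\ref{FFEE2}), that $M^r$ and $N^r$ may be treated as vector fields so that $\nabla_r M^r = \frac{1}{\sqrt{-g}}\,\partial_r(\sqrt{-g}\,M^r)$ is a genuine, chart-covariant divergence rather than a coordinate-dependent expression. I would settle this by observing that, with $\hat F = dx^3\wedge dx^4$, a short contraction gives $M^r = \hat F^{r3}$ and $N^r = \hat F^{r4}$; that is, $M$ and $N$ are the vector fields obtained by contracting the genuinely tensorial $\hat F^{\mu\nu}$ against the foliation one-forms $dx^3$ and $dx^4$. Because $dx^3$ and $dx^4$ are well-defined closed one-forms on $U_p$, the fields $M$ and $N$ are bona fide, the divergences are meaningful, and (\ref{divfreeFFeq}) is the intended geometric condition on the foliation. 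Establishing this identity is the only real computation; the remainder is bookkeeping in the adapted chart.
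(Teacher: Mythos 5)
Your proof is correct and follows essentially the same route as the paper: the decisive observation in both is that the null-foliation condition (\ref{detkerform}) forces $M^4 = 0 = N^3$, so the left-hand sides of (\ref{FFEE1})--(\ref{FFEE2}) vanish identically and the FFE equations collapse to the $u$-independent conditions (\ref{divfreeFFeq}), giving the equivalence for the entire class of functions $u(x^3,x^4)$ at once. The additional details you supply --- deriving the form (\ref{finalformF}) from the kernel condition (which the paper cites from \cite{GT14}) and legitimizing $M^r$, $N^r$ as genuine vector fields via contraction of $dx^3 \wedge dx^4$ with the coordinate one-forms (which the paper establishes later through $\chi_M^\sharp = M$) --- simply fill in steps the paper leaves implicit or defers.
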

\begin{proof} Since $M^4=0=N^3$, the result follows immediately from  eqs.(\ref{FFEE1}) and (\ref{FFEE2}).
$\blacksquare$
\end{proof}
In the following section, the above theorem will be rewritten in completely geometric terms. In doing so, the physical meaning behind the requirements of the theorem, and its chart independence will be manifest. In the meantime consider another null foliation adapted chart $(\bar x^1, \dots, \bar x^4)$. It is instructive to see how a simple 2-form of the type
$$u(x^3, x^4) \;dx^3 \wedge dx^4$$
might transform under a change of adapted coordinates. Since the new chart is also adapted to the foliation, we must have that
\begin{equation}
   \frac{\partial x^a}{\partial \bar x^i}=0\;.
   \label{charttrans}
\end{equation}
Here $i=1,2$ and $a=3,4$.
Then
$$\frac{\partial}{\partial \bar x^i} u(x^3, x^4) = \frac{\partial x^1 }{\partial \bar x^i}\; \frac{\partial }{\partial  x^1} u(x^3, x^4) + \frac{\partial x^2 }{\partial \bar x^i} \;\frac{\partial }{\partial  x^2} u(x^3, x^4) =0\;.$$
Also
$$dx^3 \wedge dx^4= D\;d\bar x^3 \wedge d\bar x^4$$
where
\begin{equation}
   D =\left(\frac{\partial x^3}{\partial \bar x^3}\;\frac{\partial x^4}{\partial \bar x^4}-\frac{\partial x^4}{\partial \bar x^3}\frac{\partial x^3}{\partial \bar x^4}\right)\;. 
   \label{ddef}
\end{equation}
From eq.(\ref{charttrans})
$$\frac{\partial }{\partial \bar x^i} \frac{\partial x^a}{\partial \bar x^b} =\frac{\partial }{\partial \bar x^b} \frac{\partial x^a}{\partial \bar x^i}=0\;,$$
for $a,b = 3,4$ and $i=1,2$, and so
$$\frac{\partial }{\partial \bar x^i} D=0\;.$$
I.e., we get that the original 2-form
$$u(x^3, x^4) \;dx^3 \wedge dx^4 =\bar u(\bar x^3, \bar x^4) \;d\bar x^3 \wedge d\bar x^4 $$
where 
\begin{equation}
    \bar u(\bar x^3, \bar x^4) = D \cdot u(x^3, x^4)\;.
    \label{utrans}
\end{equation}
I.e., $F$ preserves form under a change of null foliation adapted charts.
\section{The Geometry of the Null Force-Free Field}

The ray along $l$ generate all the null vectors in $T({\cal F}_a)$. In this manner, a null force-free field $F$ defines a unique null ray in spacetime that are tangent to the field sheets; i.e., locally, one obtains a smooth lightlike vector field $l$ in $\ker F$. Since a null vector cannot be normalized, $l$ is far from unique. But this is a familiar problem in the theory of null hypersurfaces and we know how to deal with this redundancy. Globally, $l$ defines a null congruence. At the onset, there is no reason to assume that $l$ is a geodesic congruence. Since, ${\cal F}_a$ is 2-dimensional, there exists a local spacelike vector field $s$ in $\ker F$. We will normalize $s$ so that $g(s,s)=1$. Together $l$ and $s$ span $\ker F$. Note that
$\ker F$ can also be defined by the requirement 
\begin{equation}
   \ker F= \Big \{v \in T({\cal M})\; | \;\alpha(v)=0=l^\flat(v)\;\Big\}\;.
   \label{kerf}
\end{equation}
Finally, normalize $\alpha$ so that $g(\alpha, \alpha)=1$, and to complete the tetrad, let $n$ be a null vector field such that
$$n^\flat (l)=-1, \;\;n^\flat (s)=0=\alpha(n)\;.$$
To recap, $l$ and $s$ span $\ker F$, the span of $l^\flat$ and $\alpha$ are all the forms that annihilate vectors in $\ker F$, $n$ completes the tetrad, and $(s, l, \alpha^\sharp, n)$ span $T({\cal M})$. We shall refer to $(s, l, \alpha^\sharp, n)$ as a {\bf null foliation adapted frame} for a null foliation ${\cal F}$. $l,n$ are null, and $\alpha^\sharp, s$ are unit spacelike vectors such that
$$l^\perp = {\rm span}\;\{\;l, \;s, \;\alpha^\sharp\;\}\;,$$
and
$$n^\perp = {\rm span}\;\{\;n, \;s, \;\alpha^\sharp\;\}\;.$$
There is not a unique way to pick out $l,s$ and $\alpha$. One is free to make transformation of the following type,
\begin{equation}
    l \rightarrow f\; l,\; {\rm wherein}\; n \rightarrow (1/f)\; n\;,
    \label{lntrans}
\end{equation}
and
\begin{equation}
    s \rightarrow s+ g\;l\;,
    {\rm and} \;
    \alpha \rightarrow \alpha + h \;l^\flat\;,
    \label{satrans1}
\end{equation}
wherein
$$n \rightarrow n +\left(\frac{g^2+h^2}{2}\right)\;l + g\;s+h\;\alpha^\sharp\;,$$
and all the essential required properties of $l,s$ and $\alpha$ are still retained.
Here $f \neq 0,\;g$ and $h$ are any smooth function on ${\cal M}$. 
The differential-forms version of Frobenius's theorem, eq.(\ref{kerf}), and the fact that $\ker F$ is integrable implies that 
\begin{equation}
    dl^\flat= l^\flat \wedge A + \alpha \wedge B\;,
    \label{dflat}
\end{equation}
and
\begin{equation}
    d\alpha= l^\flat \wedge C + \alpha \wedge D\;,
    \label{flatal}
\end{equation}
for some 1-forms $A,B,C$ and $D$. We can agree to write $A$ as
$$A= A_n \;n^\flat+ A_\alpha\; \alpha + A_s \;s^\flat\;,$$
and $B$ as
$$B= B_n \;n^\flat+ B_s \;s^\flat\;.$$
Similar remarks apply to $C$ and $D$. There is a choice between eq.(\ref{simF}) and eq.(\ref{finalformF}) as a starting point of our discussion. Although we plan to proceed in a completely geometric, coordinate free formalism, it will be useful to establish a relationship between the two expressions. Let

\begin{equation}
  \left(
           \begin{array}{c}
             \alpha \\
             l^\flat \\
           \end{array}
         \right)=\left(
    \begin{array}{cc}
      \alpha_3 & \alpha_4 \\
      l^\flat_3 & l^\flat_4  \\
    \end{array}
  \right)\left(
           \begin{array}{c}
             dx^3 \\
             dx^4 \\
           \end{array}
         \right)\;. 
         \label{chart2frame}
\end{equation}
Then
$$u(x^3, x^4) \;dx^3 \wedge dx^4 = (u \cdot \kappa)\; \alpha \wedge l^\flat\;,$$
where 
\begin{equation}
  \kappa=(\alpha_3 \;l^\flat_4-\alpha_4\; l^\flat_3)^{-1}\;.
  \label{kappadef}
\end{equation}
Thus
\begin{equation}
F= (u \cdot \kappa)\; \alpha \wedge l^\flat\;.
\label{mixedF}
\end{equation}
Clearly $F$ is invariant under transformations in eqs.(\ref{lntrans}) and (\ref{satrans1}) since
\begin{equation}
   \kappa \rightarrow \frac{\kappa}{f}\;. 
   \label{kappatrans}
\end{equation}
In order to understand the governing equations for both $u$ and $\kappa$, eq.(\ref{mixedF}) is our preferred form for $F$ rather than eq.(\ref{simF}). 
In a foliation adapted frame $(s, l, \alpha^\sharp, n)$, the metric takes the simple form given by
\begin{equation}
   g= \left(
    \begin{array}{cccc}
      1 & 0 &0&0 \\
       0 & 0 &0&-1 \\
      0 & 0&1&0 \\
      0&-1&0&0
    \end{array}
  \right)=g^{-1} \;. 
  \label{frameg}
\end{equation}
The above form of the metric will be useful as we take the Hodge-Star dual of forms. Having developed all the necessary background material we begin by formulating a geometric version of theorem \ref{existuninull}.
\begin{theorem} {\rm({\bf Version} 2)}
Let ${\cal F}$ be a null foliation of ${\cal M}$ with metric $g$. Let $\big(U_p, \phi_p= (x^1, \dots, x^4)\big)$ be a null foliation adapted chart about any arbitrary point $p \in {\cal M}$, and let $(s, l, \alpha^\sharp, n)$ be a null foliation adapted frame for ${\cal F}$ in $U_p$.  Then $F$ given by  eq.(\ref{mixedF}) for any smooth function $u$ on ${\cal F}|_{U_p}$ is a unique class of force-free null solutions in $U_p$  such that $\ker F$ contains exactly the vectors in $T({\cal F}_a)$ if and only if 

\begin{subequations}
\begin{eqnarray}
    d\alpha(l,\alpha^\sharp)=ds^\flat(l,s)\\
  dl^\flat(l, \alpha^\sharp)=0\;.
\end{eqnarray}
\label{cond}
\end{subequations}
\label{geoexistuninull}
\end{theorem}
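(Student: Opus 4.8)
The plan is to work entirely in the null foliation adapted frame $(s,l,\alpha^\sharp,n)$ and to turn the force-free condition (\ref{FFwedgecond}), namely $J\wedge\alpha=0=J\wedge l^\flat$ with $J=d*F$ and $\beta=l^\flat$, into statements about the frame alone. The first step is to compute the Hodge dual of $F=(u\kappa)\,\alpha\wedge l^\flat$ using the simple form (\ref{frameg}) of the metric. Because $g=g^{-1}$ in this frame, raising the indices of $\alpha\wedge l^\flat$ and contracting with the Levi-Civita tensor is immediate, and one finds $*F=(u\kappa)\,s^\flat\wedge l^\flat$; in particular $*F$ retains the null factor $l^\flat$, the hallmark of a null field. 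Writing $\phi=u\kappa$, the current is then
\begin{equation}
J=d*F=d\phi\wedge s^\flat\wedge l^\flat+\phi\,\big(ds^\flat\wedge l^\flat-s^\flat\wedge dl^\flat\big)\,.
\end{equation}

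The second step extracts the second equation of (\ref{cond}). Wedging $J$ with $l^\flat$, every term carrying a repeated $l^\flat$ vanishes, and after inserting the Frobenius expansion (\ref{dflat}) for $dl^\flat$ only the $n^\flat$-component $B_n$ of $B$ survives, so that $J\wedge l^\flat=\mp\,\phi\,B_n\,\text{vol}$ up to orientation. Evaluating (\ref{dflat}) on the pair $(l,\alpha^\sharp)$ and using $g(l,l)=g(l,\alpha^\sharp)=0$ together with $g(\alpha^\sharp,\alpha^\sharp)=1$ identifies $B_n=dl^\flat(l,\alpha^\sharp)$. Since $\kappa\neq0$ by (\ref{kappadef}) and $u$ is an arbitrary nonvanishing function of $(x^3,x^4)$, the class is force-free in this channel if and only if $dl^\flat(l,\alpha^\sharp)=0$.

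The third step extracts the first equation of (\ref{cond}), and is where the real work lies. Expanding $J\wedge\alpha$ against the volume form produces three contributions: a term proportional to $l\phi$ from $d\phi$, a term $\phi A_n$ from $s^\flat\wedge dl^\flat\wedge\alpha$ via (\ref{dflat}), and a term $\phi\,ds^\flat(l,s)$ from $ds^\flat\wedge l^\flat\wedge\alpha$. The obstacle is that $l\phi$ and $A_n$ are individually \emph{not} frame-geometric, so they must be eliminated. This is accomplished by exploiting that $F$ is automatically closed: since $F=u(x^3,x^4)\,dx^3\wedge dx^4$ in (\ref{finalformF}) one has $dF=0$, and expanding $d\phi\wedge\alpha\wedge l^\flat+\phi\,d(\alpha\wedge l^\flat)=0$ with the decompositions (\ref{dflat}) and (\ref{flatal}), then matching the two independent $3$-forms, yields the identity $l\phi=-\phi\,(D_n+A_n)$, where $D_n$ is the $n^\flat$-component of $D$. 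Substituting this relation collapses $(l\phi)+\phi A_n$ to $-\phi D_n$, so that $J\wedge\alpha=\phi\,[\,ds^\flat(l,s)-D_n\,]\,\text{vol}$. Evaluating (\ref{flatal}) on $(l,\alpha^\sharp)$ gives $D_n=d\alpha(l,\alpha^\sharp)$, and arbitrariness of $u$ forces $d\alpha(l,\alpha^\sharp)=ds^\flat(l,s)$.

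The remaining assertions — that these two equations are also sufficient, that they cut out a unique class parametrized by the single free function $u(x^3,x^4)$, and that $\ker F$ consists of exactly the vectors in $T(\mathcal{F}_a)$ — follow by reversing the equivalences just established and inheriting the corresponding statements from Theorem \ref{existuninull}, since $F$ in (\ref{mixedF}) and (\ref{finalformF}) is one and the same object. I expect the principal difficulty to be bookkeeping in the non-coordinate null frame, where $l^\flat$ and $n^\flat$ are not the naive duals of $l$ and $n$, so that care with the dual coframe and with orientation signs is essential; the conceptually decisive point is the cancellation, forced by $dF=0$, of the non-geometric derivatives of $\phi$ in the $J\wedge\alpha$ computation, which is what allows the final condition to be phrased purely in terms of the foliation adapted frame.
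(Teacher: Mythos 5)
Your proposal is correct and follows essentially the same route as the paper's own proof: compute $*F=(u\kappa)\,l^\flat\wedge s^\flat$ from eq.(\ref{frameg}), impose $J\wedge l^\flat=0$ to isolate $B_n=dl^\flat(l,\alpha^\sharp)$, and then impose $J\wedge\alpha=0$ while using $dF=0$ (the paper's eq.(\ref{df=0cond})) to cancel the $l(u\kappa)$ and $A_n$ terms, leaving $D_n=-ds^\flat_{ns}$, i.e. $d\alpha(l,\alpha^\sharp)=ds^\flat(l,s)$. The only discrepancy is an immaterial orientation sign in $*F$ ($s^\flat\wedge l^\flat$ versus $l^\flat\wedge s^\flat$), which you correctly note does not affect the vanishing conditions.
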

\begin{proof}
$dF=0$ implies that
$$d (u\cdot\kappa) \wedge \alpha \wedge l^\flat=(u\cdot\kappa)\; \big[\alpha \wedge dl^\flat-d\alpha \wedge l^\flat\big]\;.$$
Eqs.(\ref{dflat}) and (\ref{flatal}) reduces the above equation to the form
\begin{equation}
   d (u\cdot\kappa) \wedge \alpha \wedge l^\flat=(u\cdot\kappa)\; \alpha \wedge l^\flat \wedge (A+D)\;. 
   \label{df=0cond}
\end{equation}
Meanwhile, from eq.(\ref{frameg}), we get that
$$*F=(u\cdot\kappa) * ( \alpha \wedge l^\flat) = (u\cdot\kappa) \;l^\flat\wedge s^\flat\;.$$
Then
\noindent\vskip0.2in
$J=d*F$
$$= d(u\cdot\kappa) \wedge l^\flat\wedge s^\flat +(u\cdot\kappa) \; dl^\flat\wedge s^\flat-(u\cdot\kappa) \; l^\flat\wedge ds^\flat\;.$$
Once again, using eq.(\ref{dflat}) and (\ref{flatal}), we get that
\noindent\vskip0.2in
$J= d(u\cdot\kappa) \;\wedge l^\flat\wedge s^\flat +(u\cdot\kappa) \; l^\flat \wedge (A_n \;n^\flat+A_\alpha \;\alpha) \wedge s^\flat$
\begin{equation}
 +(u\cdot\kappa) \;\alpha \wedge B_n\;n^\flat \wedge s^\flat -(u\cdot\kappa) \; l^\flat\wedge ds^\flat\;.  
 \label{Jexplicit}
\end{equation}
From above and eq.(\ref{FFwedgecond}), one of the two force-free conditions become
$$0=J \wedge l^\flat \iff B_n =0 \;.$$
Therefore
$$B_n=dl^\flat(l, \alpha^\sharp)=0\;.$$
Therefore eq.(\ref{cond}b) is a necessary condition. Finally, imposing $J \wedge \alpha =0$ we get that
$$0=d(u\cdot\kappa) \;\wedge l^\flat\wedge s^\flat \wedge \alpha +(u\cdot\kappa)\;A_n \; l^\flat \wedge n^\flat \wedge s^\flat \wedge \alpha$$
$$-(u\cdot\kappa) \; l^\flat\wedge ds^\flat \wedge \alpha\;$$
$$=\alpha \wedge l^\flat \wedge n^\flat \wedge s^\flat\;[(A_n +D_n)-A_n+ds^\flat _{ns}]\;. $$
Here $ds^\flat _{ns}$ denotes the $n^\flat \wedge s^\flat$ component of the 2-form $ds^\flat$. 
The final expression above was obtained by using eq.(\ref{df=0cond}) to eliminate the term containing $d(u \cdot \kappa)$. 
Therefore the only remaining requirement for FFE is given by
$$D_n = -ds^\flat _{ns}\;.$$
By definition,
\begin{equation}
   ds^\flat _{ns}=-ds^\flat(l,s)\;.
   \label{dsfalt}
\end{equation}
Also, from eq.(\ref{flatal}) it is clear that
\begin{equation}
    D_n=d\alpha (l, \alpha^\sharp)\;.
    \label{duns}
\end{equation}
This completes the proof of the theorem.
$\blacksquare$
\end{proof}
For the moment, it appears that the geometric version of the theorem is no more enlightening than the coordinate adapted version. But, we will reformulate eqs.(\ref{cond}) into physically meaningful terms shortly.
It is important to note that the expression $(u\cdot\kappa)$ is not present in the requirements of the theorem above. This is already consistent with what we know from \cite{Menon_FF20} since there will always be inherent freedom in the choice of $u(x^3, x^4)$ for null, force-free fields.
In the remainder of this section, we will interpret eqs. (\ref{cond}). The following theorem will show that $l$ is a pregeodesic tangent vector field. I.e., the integral curve of $l$ when suitably parametrized is a (null) geodesic. This is the key result that will allow for a geometric interpretation of eq. (\ref{cond}a).
\begin{theorem}
$ dl^\flat(l, \alpha^\sharp)=0$ if and only if $l$ is a pregeodesic vector field.
\label{pregeolem}
\end{theorem}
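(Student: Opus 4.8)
The plan is to connect the condition $dl^\flat(l,\alpha^\sharp)=0$ directly to the geodesic equation for the null vector field $l$. Recall that $l$ is a pregeodesic precisely when its integral curves, after reparametrization, are null geodesics; equivalently, $\nabla_l l = \lambda\, l$ for some function $\lambda$. So my first step is to translate this intrinsic geodesic condition into the language of the exterior derivative of $l^\flat$. Since $l$ is null, $g(l,l)=0$, and differentiating gives $g(\nabla_v l, l)=0$ for every $v$; this already forces $\nabla_l l$ to lie in $l^\perp = \mathrm{span}\{l,s,\alpha^\sharp\}$ (using the frame and metric in eq.~(\ref{frameg})). Thus $\nabla_l l = \lambda\, l + \mu\, s + \nu\, \alpha^\sharp$ for some functions, and the pregeodesic condition is exactly the vanishing of the two transverse coefficients $\mu$ and $\nu$.

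The key computational identity I would use is the standard formula relating the covariant and exterior derivatives: for any vector field $l$,
\begin{equation}
dl^\flat(X,Y) = g(\nabla_X l, Y) - g(\nabla_Y l, X)\;,
\label{dflatformula}
\end{equation}
which holds because the Levi-Civita connection is torsion-free. Applying this with $X=l$ and $Y=\alpha^\sharp$ gives
$$dl^\flat(l,\alpha^\sharp) = g(\nabla_l l, \alpha^\sharp) - g(\nabla_{\alpha^\sharp} l, l)\;.$$
The second term vanishes because $g(\nabla_{\alpha^\sharp} l, l) = \tfrac{1}{2}\alpha^\sharp\big(g(l,l)\big) = 0$. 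Hence $dl^\flat(l,\alpha^\sharp) = g(\nabla_l l, \alpha^\sharp)$, and by the frame metric in eq.~(\ref{frameg}) this equals the coefficient $\nu$ above (up to a harmless sign from $g(\alpha^\sharp,\alpha^\sharp)=1$). So one of the two transverse coefficients of $\nabla_l l$ is captured immediately.

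The remaining task — and the step I expect to be the genuine obstacle — is to show that the $s$-coefficient $\mu = g(\nabla_l l, s)$ also vanishes, or rather that it vanishes automatically given the null-foliation structure, so that $dl^\flat(l,\alpha^\sharp)=0$ alone suffices for the pregeodesic property. Here I would invoke the integrability already encoded in eq.~(\ref{dflat}), $dl^\flat = l^\flat\wedge A + \alpha\wedge B$, which says $dl^\flat$ has no $n^\flat\wedge s^\flat$ or pure $s^\flat\wedge(\cdots)$ term beyond those tied to $l^\flat$ and $\alpha$. Computing $g(\nabla_l l, s)$ via eq.~(\ref{dflatformula}) as $dl^\flat(l,s)$ and reading off the relevant component of $dl^\flat$ from its Frobenius decomposition, I expect the $s$-coefficient to be forced to zero (or proportional to $l^\flat$ contracted against $l$, hence zero) precisely because $l,s\in\ker F = T({\cal F}_a)$ and the distribution is involutive.

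Thus the logical skeleton is: (i) null $l$ confines $\nabla_l l$ to $l^\perp$; (ii) the involutivity of $\ker F$ kills the $s$-component; (iii) eq.~(\ref{dflatformula}) identifies the surviving $\alpha^\sharp$-component with $dl^\flat(l,\alpha^\sharp)$; so $\nabla_l l \propto l$ iff this quantity vanishes, which is the pregeodesic condition. The delicate point to get right is step (ii) — verifying that the $s$-direction obstruction is genuinely absent rather than merely being a separate condition — and I would check this carefully against the explicit form of $A$ and $B$ in the frame, where $B = B_n n^\flat + B_s s^\flat$ and the relevant contraction with $l$ eliminates the $n^\flat$ piece, leaving a term that the frame metric renders harmless.
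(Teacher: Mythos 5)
Your proof is correct, and its key step runs by a genuinely different mechanism than the paper's. The common part: torsion-freeness gives $dl^\flat(X,Y)=g(\nabla_X l,Y)-g(\nabla_Y l,X)$, and nullity of $l$ then yields $dl^\flat(l,\alpha^\sharp)=g(\nabla_l l,\alpha^\sharp)$ and confines $\nabla_l l$ to $l^\perp$, so the whole theorem reduces to showing that $g(\nabla_l l,s)=0$ holds automatically (the converse direction, pregeodesic $\Rightarrow$ vanishing component, being trivial in both treatments). There the routes split. The paper argues pointwise: it Lie-transports $s|_p$ along $l$ to obtain $\tilde s$ with $[\tilde s,l]=0$, then uses torsion-freeness and nullity (together with the implicit fact that the flow of $l$ preserves the leaf, so that $g(l,\tilde s)$ vanishes along the integral curve) to conclude $g(\nabla_l l,s)|_p=0$. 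You instead evaluate $dl^\flat(l,s)$ against the Frobenius decomposition of eq.(\ref{dflat}), $dl^\flat=l^\flat\wedge A+\alpha\wedge B$: since $\alpha$ and $l^\flat$ annihilate both $l$ and $s$, every term vanishes identically; indeed $(\alpha\wedge B)(l,s)=\alpha(l)B(s)-\alpha(s)B(l)=0$ and $(l^\flat\wedge A)(l,s)=0$ outright, so the careful inspection of the components of $B$ that you flagged as the delicate point is not even needed. Your version is purely algebraic and arguably cleaner: the $s$-obstruction dies as an immediate consequence of integrability, with no auxiliary vector field and no pointwise construction. What the paper's argument buys in exchange is independence from the form decomposition: it invokes only the existence of the leaves and the torsion-free metric connection, not eq.(\ref{dflat}). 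A last cosmetic difference: the paper first places $\nabla_l l$ in $\ker F$ via eq.(\ref{kerf}) (using the hypothesis plus nullity) and then kills the $s$-component, whereas you place $\nabla_l l$ in $l^\perp$ first and then kill both transverse components; this is a reordering of the same bookkeeping, not a different idea.
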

\begin{proof}Let $p \in {\cal F}_a$, be as in theorem {\rm \ref{geoexistuninull}}, and let $(s, l, \alpha^\sharp, n)$ be a null foliation adapted frame for ${\cal F}$ about some open set $U_p$, where $p \in {\cal F}_a \in {\cal F}$. Then
$$ dl^\flat(l, \alpha^\sharp)= (l^\mu \alpha^\nu- \alpha^\mu l^\nu)\;\nabla_\mu l_\nu \;.$$
Eq.(\ref{cond}b) implies that
$$ dl^\flat(l, \alpha^\sharp)= l^\mu \alpha^\nu\;\nabla_\mu l_\nu =0\;,$$
or
$$\alpha_\nu\;(\nabla_l\; l^\nu)=0\;.$$
Additionally, since
$$l_\nu\;(\nabla_l\; l^\nu)=0\;,$$
we get that
$$\nabla_l\; l^\nu \in \ker F\;.$$
Now let $\tilde s$ be the Lie transport of $s|_p$ along $l$ passing through the point $p$ in ${\cal F}_a$. I.e., $[\tilde s, l]=0$.
 Then,  since the torsion tensor is trivial in general relativity,
 \vskip0.2in \noindent
 $g(\nabla_l l,s)|_p$
 $$=g(\nabla_l l, \tilde s)|_p= -g(\nabla_l \tilde s, l)|_p=g(\nabla_{\tilde s} l, l)|_p=0\;.$$
 I.e., $\nabla_l l$ is either vanishing, or at most proportional to $l$. The converse of what we have proved is seen to be true by simply reversing the argument.
$\blacksquare$
\end{proof}
The null mean curvature, or the null expansion scalar, $\theta$, for the congruence generated by $l$ is given by
\begin{equation}
   \theta= \frac{1}{2} \;\Big[\;g(\nabla_s\; l, \;s)+g(\nabla_{\alpha^\sharp} \;l, \;\alpha^\sharp)\;\Big]\;. 
   \label{nmcdef}
\end{equation}
In addition to the transformation in eq.(\ref{satrans1}), eq.(\ref{nmcdef}) is invariant under orthogonal transformations involving $\alpha^\sharp$ and $s$:
$$\left(
           \begin{array}{c}
             \bar\alpha^\sharp \\
             \bar s \\
           \end{array}
         \right)= O\left(
           \begin{array}{c}
             \alpha^\sharp \\
             s\\
           \end{array}
         \right)\;.
$$
Here $O$ is any $2 \times 2$ spacetime dependent orthogonal matrix. Under eq.(\ref{lntrans}), clearly
$$\theta \rightarrow f \theta\;.$$
\begin{definition}Let $(s, l, \alpha^\sharp, n)$ be a null foliation adapted frame for a null foliation  ${\cal F}$. 
If
$$\theta= g(\nabla_s\; l, \;s)\;, $$
we say that ${\cal F}$ admits an equipartition of null mean curvature with respect to the null pregeodesic vector field $l$. 
\end{definition}
Note that when ${\cal F}$ admits an equipartition of null mean curvature with respect to the null pregeodesic vector field $l$ the value of $\theta$ is equally shared by $g(\nabla_s\; l, \;s)$ and $g(\nabla_{\alpha^\sharp}\; l, \alpha^\sharp\;)$.
\begin{lemma}Let $(s, l, \alpha^\sharp, n)$ be a null foliation adapted frame for a null foliation  ${\cal F}$. Then
$d\alpha(l,\alpha^\sharp)=ds^\flat(l,s)$ (i.e., eq.(\ref{cond}a) is true) if and only if ${\cal F}$ admits an equipartition of null mean curvature with respect to the null pregeodesic vector field $l$. 
\label{lemequi}
\end{lemma}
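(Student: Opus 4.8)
The plan is to rewrite both sides of eq.(\ref{cond}a) in terms of the covariant derivative of $l$, and then to recognize the resulting condition as the statement that the two contributions to the null expansion scalar in eq.(\ref{nmcdef}) coincide. The essential tool is the torsion-free identity $dv^\flat(X,Y)=g(\nabla_X v,Y)-g(\nabla_Y v,X)$, valid for any vector field $v$, which follows immediately from $d\omega(X,Y)=(\nabla_X\omega)(Y)-(\nabla_Y\omega)(X)$ together with metric compatibility.

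First I would apply this identity to $\alpha=(\alpha^\sharp)^\flat$ with $X=l$ and $Y=\alpha^\sharp$, obtaining $d\alpha(l,\alpha^\sharp)=g(\nabla_l\alpha^\sharp,\alpha^\sharp)-g(\nabla_{\alpha^\sharp}\alpha^\sharp,l)$. Because the frame normalization in eq.(\ref{frameg}) gives $g(\alpha^\sharp,\alpha^\sharp)=1$ throughout $U_p$, differentiating along $l$ kills the first term and leaves $d\alpha(l,\alpha^\sharp)=-g(\nabla_{\alpha^\sharp}\alpha^\sharp,l)$. Likewise, applying the identity to $s^\flat$ with $X=l$ and $Y=s$ and using $g(s,s)=1$ yields $ds^\flat(l,s)=-g(\nabla_s s,l)$.

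The second step converts these ``acceleration'' terms into the ``expansion'' terms of eq.(\ref{nmcdef}). Since $g(\alpha^\sharp,l)=0$ and $g(s,l)=0$ hold identically in $U_p$ (again from eq.(\ref{frameg})), differentiating along $\alpha^\sharp$ and along $s$ respectively gives $g(\nabla_{\alpha^\sharp}\alpha^\sharp,l)=-g(\nabla_{\alpha^\sharp}l,\alpha^\sharp)$ and $g(\nabla_s s,l)=-g(\nabla_s l,s)$. Combining with the previous step, eq.(\ref{cond}a) is therefore equivalent to $g(\nabla_{\alpha^\sharp}l,\alpha^\sharp)=g(\nabla_s l,s)$. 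Comparing this with eq.(\ref{nmcdef}) shows at once that the equality holds precisely when $\theta=g(\nabla_s l,s)$, i.e. when ${\cal F}$ admits an equipartition of null mean curvature; both directions of the biconditional are obtained simultaneously, so no separate converse argument is needed.

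I expect no serious obstacle: the computation is short and purely algebraic. The only points requiring care are sign bookkeeping and, more substantively, the repeated use of the fact that the frame inner products $g(\alpha^\sharp,\alpha^\sharp)$, $g(s,s)$, $g(\alpha^\sharp,l)$ and $g(s,l)$ are constant on all of $U_p$ rather than merely at the base point. This constancy is guaranteed by eq.(\ref{frameg}), which fixes the metric components in the adapted frame, so that their derivatives in every direction vanish and the crossed covariant-derivative identities above are legitimate.
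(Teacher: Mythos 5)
Your proposal is correct and follows essentially the same route as the paper: the paper's proof is the index-notation version of your computation, establishing $ds^\flat(l,s)=g(\nabla_s l,s)$ and $d\alpha(l,\alpha^\sharp)=g(\nabla_{\alpha^\sharp}l,\alpha^\sharp)$ by using precisely the two facts you invoke (constancy of the unit norms and of the orthogonality relations in the adapted frame), and then substituting into eq.(\ref{nmcdef}). Your invariant phrasing via $dv^\flat(X,Y)=g(\nabla_X v,Y)-g(\nabla_Y v,X)$ is just a coordinate-free repackaging of the same argument, with the intermediate step $-g(\nabla_{\alpha^\sharp}\alpha^\sharp,l)$ made explicit rather than absorbed into the index manipulation.
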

\begin{lproof}
$$ds^\flat(l,s)= (l^\mu s^\nu- s^\mu l^\nu)\;\nabla_\mu s_\nu= s_\nu \nabla_s \;l^\nu = g(\nabla_s\; l, \;s)\;.$$
Similarly,
$d\alpha(l,\alpha^\sharp)=g(\nabla_{\alpha^\sharp}\; l, \;\alpha^\sharp)\;.$ A direct substitution of the expressions above into eq.(\ref{nmcdef}) gives the needed result.
$\blacksquare$
\end{lproof}
We finally write down the conditions for a null force-free field in completely geometric terms.
\begin{theorem} {\rm({\bf Version} 3)}
Let ${\cal F}$ be a null foliation of ${\cal M}$ with metric $g$. 
 Let $\big(U_p, \phi_p= (x^1, \dots, x^4)\big)$ be a null foliation adapted chart about any arbitrary point $p \in {\cal M}$, and let $(s, l, \alpha^\sharp, n)$ be a null foliation adapted frame for ${\cal F}$ in $U_p$. Then $F$ given by  eq.(\ref{mixedF}) for any smooth function $u$ on ${\cal F}|_{U_p}$ is a unique class of null, force-free  solutions in $U_p$ such that $\ker F$ contains exactly the vectors in $T({\cal F}_a)$ if and only if 
\begin{itemize}
    \item $l$ is a pregeodesic vector field, and
    \item ${\cal F}$ admits an equipartition of null mean curvature with respect to the null pregeodesic vector field $l$. 
\end{itemize}
\label{existunifianl}
\end{theorem}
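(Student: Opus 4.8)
The plan is to assemble Version~3 directly from the results already in hand, since it is nothing more than a restatement of Version~2 in which each analytic condition has been given a geometric reading. First I would invoke Theorem~\ref{geoexistuninull} (Version~2), which establishes that $F$ as in eq.(\ref{mixedF}) is a unique class of null, force-free solutions with $\ker F$ containing exactly the vectors in $T({\cal F}_a)$ if and only if the two conditions in eqs.(\ref{cond}) hold, namely $d\alpha(l,\alpha^\sharp)=ds^\flat(l,s)$ and $dl^\flat(l,\alpha^\sharp)=0$. This immediately reduces the theorem to translating each of these two equations into frame-geometric language.

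Next I would apply the two intermediate results that effect exactly this translation. Theorem~\ref{pregeolem} shows that eq.(\ref{cond}b), $dl^\flat(l,\alpha^\sharp)=0$, holds if and only if $l$ is a pregeodesic vector field, which is precisely the first bullet of Version~3. Lemma~\ref{lemequi} shows that eq.(\ref{cond}a), $d\alpha(l,\alpha^\sharp)=ds^\flat(l,s)$, holds if and only if ${\cal F}$ admits an equipartition of null mean curvature with respect to $l$, which is precisely the second bullet. Since each of these is a genuine biconditional, conjoining them yields that eqs.(\ref{cond}) hold if and only if both bullet conditions hold, thereby closing the chain of ``if and only if'' statements inherited from Version~2 and proving Version~3 with no additional computation.

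The only point demanding genuine care --- and what I expect to be the main obstacle --- is confirming that the two geometric conditions are well posed independently of the residual freedom in the adapted frame, i.e. the transformations in eqs.(\ref{lntrans}) and (\ref{satrans1}) together with orthogonal rotations of $(\alpha^\sharp,s)$. For the theorem to be a statement about the foliation ${\cal F}$ rather than about one chosen frame, both ``pregeodesic'' and ``equipartition'' must be invariant under these changes. For the pregeodesic property this is automatic: being pregeodesic is a property of the unparametrized integral curves of $l$, and the rescaling $l\rightarrow f\,l$ of eq.(\ref{lntrans}) merely reparametrizes them. For equipartition I would verify that under $l\rightarrow f\,l$ both $\theta$ and $g(\nabla_s\,l,\,s)$ scale by the common factor $f$ --- the scaling $\theta\rightarrow f\theta$ is already noted in the text --- and that under the orthogonal rotation of $(\alpha^\sharp,s)$ and the shear $s\rightarrow s+g\,l$, $\alpha\rightarrow\alpha+h\,l^\flat$ the defining relation $\theta=g(\nabla_s\,l,\,s)$ is preserved. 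Granting these invariances, the biconditional structure transfers cleanly from the frame-dependent eqs.(\ref{cond}) to the intrinsic bullet conditions, and the theorem follows.
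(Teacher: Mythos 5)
Your proposal is correct and follows essentially the same route as the paper: the paper's proof likewise assembles Version~3 by citing Theorem~\ref{geoexistuninull}, Theorem~\ref{pregeolem}, and Lemma~\ref{lemequi}, and its only explicit computation is the frame-invariance observation you flag, namely that $g(\nabla_{\tilde s}\,l,\,\tilde s)=g(\nabla_s\,l,\,s)$ for $\tilde s = s + g\,l$. The additional invariances you propose to check (under eqs.~(\ref{lntrans}) and (\ref{satrans1})) are treated by the paper as a remark immediately following the theorem rather than inside the proof, so your treatment is, if anything, slightly more complete but not a different argument.
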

\begin{proof}Suppose we picked a different $\tilde s \in \ker F$, then $\tilde s = s + g\;l$ for some smooth function $g$. Clearly, as mentioned before
$$g(\nabla_{\tilde s}\; l, \;\tilde s)= g(\nabla_s\; l, \;s)\;.$$
The rest follows from theorems \ref{geoexistuninull}, \ref{pregeolem} and lemma \ref{lemequi}.
$\blacksquare$
\end{proof}
 As is evident, the conditions of the above theorem are impervious to the redundancies in the choice of a null foliation adapted frame as given in eq. (\ref{lntrans}), and (\ref{satrans1}). It is now easy to pin down the conditions a  null foliation of ${\cal M}$ must satisfy to permit the possible existence of a null, force-free field.
\begin{theorem}
Let $F$ be a null and force-free electromagnetic field on ${\cal M}$. Let ${\cal F} = \cup\; {\cal F}_a$ be the associated null foliation of ${\cal M}$ such that $\ker F$ contains exactly the vectors in $T({\cal F}_a)$. Then through each point point of  ${\cal M}$, there exists a null geodesic vector field $l \in T({\cal F}_a)$, and ${\cal F}$ admits an equipartition of null mean curvature with respect to the null geodesic vector field $l$. 
\label{anotherv}
\end{theorem}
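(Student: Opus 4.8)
The plan is to obtain the two claimed properties directly from the hypothesis that $F$ is null and force-free, by first invoking the equivalence already established in Theorem \ref{existunifianl} and then upgrading the resulting pregeodesic vector field to a genuine geodesic one. Since $F$ is null, force-free, and satisfies $\ker F = T({\cal F}_a)$, the ``only if'' direction of Version 3 applies in any null foliation adapted frame $(s,l,\alpha^\sharp,n)$ built about a given point: the associated lightlike field $l$ is pregeodesic, and ${\cal F}$ admits an equipartition of null mean curvature with respect to $l$. Thus I already have everything except that $l$ is merely pregeodesic rather than geodesic, and the remaining task is to exhibit, through each point, a rescaling $\tilde l = f\,l$ that is an honest geodesic while carrying the equipartition along with it.

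For the rescaling I would use the gauge freedom of eq.(\ref{lntrans}). By Theorem \ref{pregeolem}, pregeodesy means $\nabla_l l = \lambda\, l$ for some smooth $\lambda$, and a short computation gives
\begin{equation}
\nabla_{f l}(f l) = f\big[(l f) + f\lambda\big]\,l\;,
\label{rescomp}
\end{equation}
so that $\tilde l = f l$ is geodesic precisely when $l(\ln|f|) = -\lambda$. Along each integral curve of $l$ this is a first-order linear ODE, solvable with any nonvanishing initial value prescribed on a hypersurface transverse to the congruence; integrating yields a smooth, nowhere-vanishing $f$ on a neighborhood of the point, hence a smooth null geodesic field $\tilde l \in T({\cal F}_a)$ through that point. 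Because the move $l \to \tilde l = f l$ (with $n \to (1/f)n$, and $s,\alpha^\sharp$ unchanged) is exactly the transformation allowed in eq.(\ref{lntrans}), the new tetrad $(s,\tilde l,\alpha^\sharp,(1/f)n)$ is again a null foliation adapted frame.

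It then remains to check that equipartition transfers from $l$ to $\tilde l$. Using $g(l,s)=0=g(l,\alpha^\sharp)$ from the frame metric eq.(\ref{frameg}), the Leibniz rule gives $g(\nabla_s(fl),s) = f\,g(\nabla_s l, s)$ and $g(\nabla_{\alpha^\sharp}(fl),\alpha^\sharp) = f\,g(\nabla_{\alpha^\sharp} l,\alpha^\sharp)$, so both halves of $\theta$ in eq.(\ref{nmcdef}) scale by the common factor $f$. Consequently the equality $g(\nabla_s l,s)=g(\nabla_{\alpha^\sharp} l,\alpha^\sharp)$ granted by Lemma \ref{lemequi} is inherited verbatim by $\tilde l$, which is precisely the statement that ${\cal F}$ admits an equipartition of null mean curvature with respect to the geodesic field $\tilde l$. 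I expect the only delicate point to be the smoothness bookkeeping of the reparametrizing factor $f$ — ensuring the transverse initial data can be chosen so that $f$ is smooth and nonvanishing throughout the relevant neighborhood — whereas the algebraic heart of the argument, the scale-invariance of equipartition under eq.(\ref{lntrans}), is immediate.
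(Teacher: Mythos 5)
Your proposal is correct and takes essentially the same route as the paper: both deduce the two properties from the ``only if'' direction of Theorem \ref{existunifianl} (Version 3), applied in a null foliation adapted frame about each point. The one place you go beyond the paper is worth noting: the paper's proof is a single sentence (the Version 3 conditions are chart-independent, hence hold through every point) and it passes silently from ``pregeodesic'' to ``geodesic,'' whereas you make that upgrade explicit by solving $l(\ln|f|)=-\lambda$ along the congruence and verifying that equipartition is inherited by $\tilde l = f\,l$, since $g(\nabla_s l, s)$ and $g(\nabla_{\alpha^\sharp} l, \alpha^\sharp)$ both scale by the common factor $f$ when $g(l,s)=g(l,\alpha^\sharp)=0$ --- consistent with the paper's own remark that $\theta \rightarrow f\theta$ under eq.(\ref{lntrans}). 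This completion is correct and fills a genuine, if minor, gap in the paper's one-line argument.
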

\begin{proof}
The above conditions do not refer to charts, and yet these conditions, as per the previous theorem, have to hold locally, and hence globally.
$\blacksquare$
\end{proof}
\begin{definition}
A null foliation ${\cal F} = \cup\; {\cal F}_a$  is a null field sheet foliation if there exists a null geodesic congruence $l$ in ${\cal M}$ such that $l\in T({\cal F}_a)$ and ${\cal F}$ admits an equipartition of null mean curvature with respect to the null geodesic vector field $l$.
\end{definition}
Then, a null field sheet foliations will always permit local null and force-free solutions.
\section{Dual Solutions and Generalized Dual Solutions}
\begin{theorem}
Let ${\cal F}$ be a null field sheet foliation of ${\cal M}$, and let $(s, l, \alpha^\sharp, n)$ be a null foliation adapted frame for ${\cal F}$. Suppose that the pair of vector fields $\alpha^\sharp$ and $l$ forms an involutive distribution, then there exists a new class of local ``dual " null force-free solutions $\tilde F$ such that the kernel of $\tilde F$ is exactly the span of $\alpha^\sharp$ and $l$. 
\label{dualsol}
\end{theorem}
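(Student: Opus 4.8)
The plan is to show that the distribution spanned by $\alpha^\sharp$ and $l$ is itself a null field sheet foliation, and then to invoke Theorem~\ref{existunifianl} to produce $\tilde F$. The essential observation is that passing from $\ker F=\mathrm{span}\{s,l\}$ to the proposed $\ker\tilde F=\mathrm{span}\{\alpha^\sharp,l\}$ simply interchanges the roles of the two unit spacelike legs $s$ and $\alpha^\sharp$ of the tetrad while keeping $l$ and $n$ fixed. First I would verify that $\mathrm{span}\{\alpha^\sharp,l\}$ is a null distribution: from eq.~(\ref{frameg}) one reads off $g(l,l)=0$ and $g(\alpha^\sharp,l)=0$, so the restriction of $g$ to this plane is degenerate. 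Combined with the hypothesis that $\{\alpha^\sharp,l\}$ is involutive, Frobenius' theorem yields a null foliation $\tilde{\mathcal F}$ together with a null foliation adapted chart whose leaves are the integral submanifolds of $\mathrm{span}\{\alpha^\sharp,l\}$.

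Next I would construct the dual adapted frame by the interchange $\tilde s:=\alpha^\sharp$, $\tilde\alpha^\sharp:=s$ with $l,n$ unchanged; equivalently $\tilde\alpha=s^\flat$ and $\tilde s^\flat=\alpha$. Reading the inner products off eq.~(\ref{frameg}), the frame $(\tilde s,l,\tilde\alpha^\sharp,n)=(\alpha^\sharp,l,s,n)$ satisfies $g(l,l)=0=g(n,n)$, $g(l,n)=-1$, $g(\tilde s,\tilde s)=1=g(\tilde\alpha^\sharp,\tilde\alpha^\sharp)$, together with $n^\flat(l)=-1$ and $n^\flat(\tilde s)=0=\tilde\alpha(n)$, so the metric in this frame again has the canonical form of eq.~(\ref{frameg}). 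Hence $(\alpha^\sharp,l,s,n)$ is a bona fide null foliation adapted frame for $\tilde{\mathcal F}$, and the associated field tensor is $\tilde F=(\tilde u\cdot\tilde\kappa)\,s^\flat\wedge l^\flat$ with $\ker\tilde F=\mathrm{span}\{\alpha^\sharp,l\}$, exactly as required.

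It then remains to check the two hypotheses of Theorem~\ref{existunifianl} for $\tilde{\mathcal F}$. The pregeodesic condition is frame independent: it is a property of the congruence $l$ alone, which we already possess because $\mathcal F$ is a null field sheet foliation. Concretely, if $\nabla_l l=\phi\,l$ then for any $X\perp l$ one has $dl^\flat(l,X)=g(\nabla_l l,X)=\phi\,g(l,X)=0$, so condition~(\ref{cond}b) holds with $\tilde\alpha^\sharp=s$ just as it did with $\alpha^\sharp$. For the equipartition condition I would use the invariance of $\theta$ in eq.~(\ref{nmcdef}) under orthogonal rotations of the screen pair $(s,\alpha^\sharp)$, the interchange $s\leftrightarrow\alpha^\sharp$ being such a rotation, so that $\tilde\theta=\theta$. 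Since $\mathcal F$ admits equipartition with respect to $l$, we have $g(\nabla_s l,s)=g(\nabla_{\alpha^\sharp}l,\alpha^\sharp)=\theta$; reading this equality in the dual frame gives $\tilde\theta=g(\nabla_{\tilde s}l,\tilde s)$, which is precisely equipartition for $\tilde{\mathcal F}$. Equivalently, condition~(\ref{cond}a), $d\alpha(l,\alpha^\sharp)=ds^\flat(l,s)$, is manifestly symmetric under the interchange and so survives verbatim. With both conditions in hand, Theorem~\ref{existunifianl} guarantees that $\tilde F$ is a null, force-free solution with $\ker\tilde F=\mathrm{span}\{\alpha^\sharp,l\}$.

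The only genuinely substantive point, and the step I would treat most carefully, is the transfer of condition~(\ref{cond}b): the pregeodesic nature of $l$ established for the original frame must still yield $dl^\flat(l,s)=0$ in the dual frame. This does not follow from Theorem~\ref{pregeolem} as literally stated, which contracts $l$ against $\alpha^\sharp$; it requires the short null-identity computation above showing that $dl^\flat(l,\cdot)$ vanishes on the entire orthogonal complement of $l$. Everything else reduces to the symmetry of conditions~(\ref{cond}) under $s\leftrightarrow\alpha^\sharp$ and the one-line degeneracy check that makes $\{\alpha^\sharp,l\}$ a null plane.
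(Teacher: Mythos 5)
Your proposal is correct and follows essentially the same route as the paper: the paper's proof is exactly the substitution $(s, l, \alpha^\sharp, n) \rightarrow (\alpha^\sharp, l, s, n)$ in Theorem~\ref{existunifianl}, justified by the symmetry $\theta = g(\nabla_s l, s) = g(\nabla_{\alpha^\sharp} l, \alpha^\sharp)$, with the involutivity hypothesis flagged as the ingredient that secures $d\tilde F = 0$. You merely make explicit the details the paper leaves tacit (the degeneracy check, the validity of the swapped frame, and the transfer of the pregeodesic condition $dl^\flat(l,\cdot)=0$ to the whole of $l^\perp$), all of which are consistent with the paper's argument.
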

\begin{proof}
If $l$ and $\alpha^\sharp$ form an involutive distribution, then in theorem \ref{existunifianl}, we simply apply the following substitution:
$$(s, l, \alpha^\sharp, n) \rightarrow (\alpha^\sharp, l, s, n)\;,$$
since 
$$\theta= g(\nabla_s\; l, \;s)= g(\nabla_{\alpha^\sharp}\; l, \;\alpha^\sharp)\;.$$
There is a subtle point hidden here: if $l$ and $\alpha^\sharp$ do not form an involutive distribution, we will not have that $d\tilde F=0$.
$\blacksquare$
\end{proof}
It is now reasonable to ask if the span of $l$ and any linear combination of $s$ and $\alpha^\sharp$ can generate a null field sheet foliation. The required condition for a generalized class of null, force-free dual solutions are easy to write down.
\begin{definition}
Let ${\cal F}$ be a null field sheet foliation of ${\cal M}$, and let $(s, l, \alpha^\sharp, n)$ be a foliation adapted frame for ${\cal F}$. Then  $l$ admits a uniform equipartition of null mean curvature if
$$g(\nabla_s l, \alpha^\sharp)+g(\nabla_{\alpha^\sharp} l, s) =0\;. $$
\end{definition}
Note the distinction that while ${\cal F}$ may admit an equipartition of null mean curvature with respect to $l$, it is $l$ itself that admits a uniform equipartition of null mean curvature.
\begin{theorem}
Let ${\cal F}$ be a null field sheet foliation of ${\cal M}$, and let $(s, l, \alpha^\sharp, n)$ be a null foliation adapted frame for ${\cal F}$. Let $l$  admit a uniform equipartition of null mean curvature, and for some smooth functions $A$ and $B$, let
$$ \hat s = A\; s+ B\;\alpha^\sharp$$ be a unit vector field such that the span of $l$ and $\hat s$ form an integrable distribution and thus generate submanifolds  that form a foliation $\hat {\cal F}$ of ${\cal M}$. Then  $\hat {\cal F}$ is a null field sheet foliation.
\label{hatsoltheo}
\end{theorem}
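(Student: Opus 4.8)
The goal is to show that the foliation $\hat{\cal F}$ generated by $l$ and $\hat s = A\,s + B\,\alpha^\sharp$ is a null field sheet foliation. By the definition of a null field sheet foliation, I must verify two things: first, that $l$ is a null geodesic congruence lying in $T(\hat{\cal F}_a)$; and second, that $\hat{\cal F}$ admits an equipartition of null mean curvature with respect to $l$. The first point is essentially free. Since ${\cal F}$ is already a null field sheet foliation, $l$ is a null geodesic, and by construction $l \in \mathrm{span}\{l,\hat s\} = T(\hat{\cal F}_a)$; the integrability hypothesis on $\{l,\hat s\}$ guarantees that these spans genuinely integrate to a foliation. So the burden of the proof is entirely the equipartition condition.

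The plan is to complete $(l,\hat s)$ to a null foliation adapted frame $(\hat s, l, \hat\alpha^\sharp, \hat n)$ for $\hat{\cal F}$ and then compute the null mean curvature $\hat\theta$ of the $l$-congruence in this new frame, using eq.(\ref{nmcdef}). The natural completing spacelike vector is the unit vector orthogonal to $\hat s$ within $\mathrm{span}\{s,\alpha^\sharp\}$, namely $\hat\alpha^\sharp = -B\,s + A\,\alpha^\sharp$ (up to sign), where $A^2+B^2=1$ since $\hat s$ is a unit vector and $s,\alpha^\sharp$ are orthonormal. Because $\hat\theta$ is frame-independent in the sense established right after eq.(\ref{nmcdef})—it is invariant under orthogonal rotations mixing $\alpha^\sharp$ and $s$, and under the shift $\hat s \to \hat s + g\,l$—the value of $\hat\theta$ computed from the pair $(\hat s,\hat\alpha^\sharp)$ equals the value $\theta$ computed from the original pair $(s,\alpha^\sharp)$. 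Equipartition for $\hat{\cal F}$ then amounts to showing
\begin{equation}
g(\nabla_{\hat s}\,l,\,\hat s) = \tfrac12\Big[g(\nabla_{\hat s}\,l,\,\hat s)+g(\nabla_{\hat\alpha^\sharp}\,l,\,\hat\alpha^\sharp)\Big],
\label{eqpart-hat}
\end{equation}
i.e. that $g(\nabla_{\hat s}\,l,\,\hat s) = g(\nabla_{\hat\alpha^\sharp}\,l,\,\hat\alpha^\sharp)$.

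The computation I would carry out is to expand both sides of the desired identity $g(\nabla_{\hat s}\,l,\hat s) = g(\nabla_{\hat\alpha^\sharp}\,l,\hat\alpha^\sharp)$ in terms of the original frame quantities. Writing $\hat s = As + B\alpha^\sharp$ and $\hat\alpha^\sharp = -Bs + A\alpha^\sharp$ and using bilinearity of $(v,w)\mapsto g(\nabla_v l,w)$ in its two slots (the $\nabla_v$ slot is $C^\infty$-linear in $v$, and the second slot is manifestly linear), each side becomes a combination of the four building blocks $g(\nabla_s l,s)$, $g(\nabla_{\alpha^\sharp} l,\alpha^\sharp)$, $g(\nabla_s l,\alpha^\sharp)$, and $g(\nabla_{\alpha^\sharp} l,s)$. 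Since ${\cal F}$ is a null field sheet foliation, $g(\nabla_s l,s)=g(\nabla_{\alpha^\sharp} l,\alpha^\sharp)=\theta$; and the uniform equipartition hypothesis gives precisely $g(\nabla_s l,\alpha^\sharp)+g(\nabla_{\alpha^\sharp} l,s)=0$. Substituting $A^2+B^2=1$, the cross terms in the difference $g(\nabla_{\hat s}l,\hat s)-g(\nabla_{\hat\alpha^\sharp}l,\hat\alpha^\sharp)$ collapse, and the two hypotheses force this difference to vanish.

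The main obstacle I anticipate is bookkeeping rather than conceptual: one must be careful that $A$ and $B$ are functions, so $\nabla_{\hat s}l = A\nabla_s l + B\nabla_{\alpha^\sharp}l$ without derivative-of-coefficient terms only because the connection acts $C^\infty$-linearly in the \emph{subscript} (directional) argument—no Leibniz terms appear there. The genuinely symmetric vs. antisymmetric split of the cross terms is where the uniform equipartition hypothesis is indispensable: it is exactly the statement that the symmetric part of $g(\nabla_\bullet l,\bullet)$ has no off-diagonal component in the $(s,\alpha^\sharp)$ block, which is what survives an arbitrary rotation into $(\hat s,\hat\alpha^\sharp)$. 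I would close by invoking Theorem \ref{existunifianl} to conclude that $\hat{\cal F}$, being a null foliation carrying the geodesic $l$ and admitting equipartition with respect to it, is indeed a null field sheet foliation.
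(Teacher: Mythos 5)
Your proposal is correct and follows essentially the same route as the paper's own proof: expand $g(\nabla_{\hat s}\,l,\hat s)$ bilinearly in the original frame, use $A^2+B^2=1$ together with the equipartition of ${\cal F}$ and the uniform equipartition hypothesis to collapse the cross terms, and then invoke Theorem \ref{existunifianl}. The only cosmetic difference is that you verify $g(\nabla_{\hat s}\,l,\hat s)=g(\nabla_{\hat\alpha^\sharp}\,l,\hat\alpha^\sharp)$ by expanding both sides, while the paper computes only $g(\nabla_{\hat s}\,l,\hat s)=\theta$ and appeals to the rotation invariance of $\theta$ noted after eq.(\ref{nmcdef}); these are equivalent.
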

\begin{proof}As before, the requirement that the span of $l$ and $\hat s$ form an integrable distribution ensures that $d\hat F=0$.
Clearly  $g(l, \hat s)=0$, and
$$1=g(\hat s, \hat s) = A^2 + B^2\;.$$
Then
$$g(\nabla_{\hat s} l, \hat s) = A^2\; g(\nabla_s l, s) + B^2\;  g(\nabla_{\alpha^\sharp} l, \alpha^\sharp) $$
$$+ AB\; \big[\;g(\nabla_s l, \alpha^\sharp)+g(\nabla_{\alpha^\sharp} l, s)  \big]$$
$$= (A^2 + B^2)\;\theta = \theta\;.$$
I.e., since $l$ is a null pregeodesic vector field, and $\hat {\cal F}$ admits an equipartition of null mean curvature, theorem \ref{existunifianl} gives us the necessary result. 
$\blacksquare$
\end{proof}
The resulting solution $\hat F$ will be of the form
\begin{equation}
    \hat F = (u \cdot \hat \kappa) \;\hat \alpha \wedge l^\flat\;,
    \label{hatsol}
\end{equation}
where
\begin{equation}
  \left(
           \begin{array}{c}
             \hat s \\
             \hat \alpha^\sharp \\
           \end{array}
         \right)=\left(
    \begin{array}{cc}
      A & B \\
      -B & A \\
    \end{array}
  \right)\left(
           \begin{array}{c}
             s \\
             \alpha^\sharp \\
           \end{array}
         \right)\;
         \label{lrot}
\end{equation}
is clearly nothing more than a spacetime dependent point wise rotation about $l$.
\section{Consistency in Formalism}
It would appear that theorems \ref{existuninull} and \ref{existunifianl} state the requirements for the existence of a null and force-free electromagnetic field, seemingly, in two different ways. 
In this section, we will show that both sets of requirements are in fact equivalent. We begin by writing down the equations of FFE, in the null field case, in a standard adapted chart.
\begin{lemma} In a null foliation adapted chart, the force-free condition, $J \wedge dx^3 = 0= J \wedge dx^4$,
if and only if
\begin{subequations}
\begin{eqnarray}
   l(\ln |\kappa|) + dl^\flat (n,l)+ ds^\flat(l,s)=0\\
  dl^\flat(l, \alpha^\sharp)=0\;.
\end{eqnarray}
\label{chartverse}
\end{subequations}
\label{lemcharFF}
\end{lemma}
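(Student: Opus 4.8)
The plan is to obtain the two conditions in eq.(\ref{chartverse}) by wedging the current $J$ directly with the frame $1$-forms, reusing the explicit expansion of $J=d*F$ already derived in the proof of theorem \ref{geoexistuninull}. The first observation is that, by eq.(\ref{chart2frame}), the span of $dx^3,dx^4$ equals the span of $\alpha,l^\flat$ (both are the annihilator of $\ker F$), so the force-free requirement $J\wedge dx^3=0=J\wedge dx^4$ is equivalent to $J\wedge\alpha=0=J\wedge l^\flat$. I may therefore start from eq.(\ref{Jexplicit}) together with the decompositions eq.(\ref{dflat}) and eq.(\ref{flatal}) of $dl^\flat$ and $d\alpha$, all expressed in the coframe $(s^\flat,l^\flat,\alpha,n^\flat)$ adapted to the metric eq.(\ref{frameg}).

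For eq.(\ref{chartverse}b) I would wedge $J$ with $l^\flat$. Every term of eq.(\ref{Jexplicit}) except the one carrying $B_n$ acquires a repeated $l^\flat$ and so vanishes, leaving $J\wedge l^\flat=(u\cdot\kappa)\,B_n\,\alpha\wedge n^\flat\wedge s^\flat\wedge l^\flat$. Since $u\cdot\kappa\neq 0$, this vanishes precisely when $B_n=dl^\flat(l,\alpha^\sharp)=0$, which is exactly the computation already carried out in theorem \ref{geoexistuninull}.

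The content of the lemma is eq.(\ref{chartverse}a), obtained by wedging $J$ with $\alpha$. The $B_n$ and $A_\alpha$ contributions die through a repeated $\alpha$, so only the $d(u\cdot\kappa)$, the $A_n$, and the $ds^\flat$ pieces survive, each a multiple of $\alpha\wedge l^\flat\wedge n^\flat\wedge s^\flat$. The essential departure from theorem \ref{geoexistuninull} is that I will \emph{not} use eq.(\ref{df=0cond}) to trade $d(u\cdot\kappa)$ for $(A+D)$; instead I expand $d(u\cdot\kappa)$ in the coframe and read off its $n^\flat$-coefficient by evaluating on $l$. Because $n^\flat(l)=g(n,l)=-1$ from eq.(\ref{frameg}) while $s^\flat(l)=l^\flat(l)=\alpha(l)=0$, this coefficient equals $-\,l(u\cdot\kappa)$. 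Collecting the reordering signs, $J\wedge\alpha=0$ reduces to
$$ l(\ln|u\cdot\kappa|)+A_n-ds^\flat_{ns}=0. $$
Three identifications then close the argument: since $u=u(x^3,x^4)$ while $l$ is tangent to the field sheets, which are the level sets of $x^3$ and $x^4$, one has $l(u)=0$, so $l(\ln|u\cdot\kappa|)=l(\ln|\kappa|)$; contracting eq.(\ref{dflat}) against $(n,l)$ gives $A_n=dl^\flat(n,l)$; and eq.(\ref{dsfalt}) gives $ds^\flat_{ns}=-ds^\flat(l,s)$. Substituting these yields eq.(\ref{chartverse}a).

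I expect the only real difficulty to be the bookkeeping: deciding which coframe component of each of $d(u\cdot\kappa)$, $dl^\flat$ and $ds^\flat$ survives a given wedge, and tracking the orientation sign when every surviving $4$-form is normalized to $\alpha\wedge l^\flat\wedge n^\flat\wedge s^\flat$. The single genuinely new ingredient beyond theorem \ref{geoexistuninull} is the identity $l(u)=0$, which lets $l(\ln|u\cdot\kappa|)$ collapse to $l(\ln|\kappa|)$ and thereby isolates the $\kappa$-dependence that eq.(\ref{cond}) had suppressed. As a consistency check I would note that eliminating $l(\ln|u\cdot\kappa|)$ between the displayed relation and eq.(\ref{df=0cond}) returns the condition $D_n=-ds^\flat_{ns}$ of theorem \ref{geoexistuninull}, confirming that eq.(\ref{chartverse}) and eq.(\ref{cond}) describe one and the same force-free field.
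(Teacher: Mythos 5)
Your proposal is correct and follows essentially the same route as the paper's own proof: both start from the expansion of $J$ in eq.(\ref{Jexplicit}) together with eqs.(\ref{dflat}) and (\ref{flatal}), impose the two wedge conditions, use $l(u)=0$ to reduce $l(\ln|u\cdot\kappa|)$ to $l(\ln|\kappa|)$, and rely on the same frame identities $A_n=dl^\flat(n,l)$, $B_n=dl^\flat(l,\alpha^\sharp)$ and eq.(\ref{dsfalt}). The only organizational difference is that you apply the invertibility of the matrix in eq.(\ref{chart2frame}) at the outset, trading $J\wedge dx^3=0=J\wedge dx^4$ for $J\wedge\alpha=0=J\wedge l^\flat$ so that the two conditions decouple immediately, whereas the paper wedges against $dx^3$ and $dx^4$ and uses that same invertibility at the end to disentangle the resulting pair of linear equations.
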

Notice that the second requirement in the above lemma is the same as eq.(\ref{cond}b).
\begin{lproof}
The form of $J$ we will be working with is given by eq.(\ref{Jexplicit}). Consider the first term in the expression for $J$ given by 
$$d(u\cdot\kappa) \;\wedge l^\flat\wedge s^\flat\;.$$
To impose the force-free condition we have to take the wedge product of the above expression with $dx^3$ (and similarly with $dx^4$).
$$d(u\cdot\kappa) \;\wedge l^\flat\wedge s^\flat \wedge dx^3= l^\flat _4\; d(u\cdot\kappa) \;\wedge dx^4 \wedge s^\flat \wedge dx^3\;.$$
Here the factor $l^\flat _4$ comes from eq.(\ref{chart2frame}). Since $dx^3$ and $dx^4$ span the same plane as $\alpha$ and $l^\flat$, the only relevant component of $d(u\cdot\kappa)$ in the above expression is along the $n^\flat$ direction. I.e.,
$$d(u\cdot\kappa) \;\wedge l^\flat\wedge s^\flat \wedge dx^3=-l^\flat _4\;\;d(u\cdot\kappa)(l)  \;n^\flat \wedge dx^4 \wedge s^\flat \wedge dx^3\;.$$
Proceeding in a similar manner we find that
$$0=J \wedge dx^3 = n^\flat \wedge s^\flat \wedge dx^4 \wedge dx^3\;\; \Big(l^\flat _4\;\Big[l(u\cdot\kappa)$$
$$ + (u\cdot\kappa)\; dl^\flat (n,l)+ (u\cdot\kappa)\; ds^\flat(l,s)\Big]+\alpha_4\;(u\cdot\kappa)\; dl^\flat(l, \alpha^\sharp)\Big)\;.$$
Since $l(u)=0$ {\rm (}$l$ is in $\ker F$, while $u=u(x^3, x^4)${\rm )}, the above equation reduces to
$$l^\flat _4\;\big[l(\ln |\kappa|) +  dl^\flat (n,l)+  ds^\flat(l,s)\Big] +\alpha_4\; dl^\flat(l, \alpha^\sharp)=0\;.$$
In exactly the same way, $J \wedge dx^4 =0$ implies that
$$l^\flat _3\;\big[l(\ln |\kappa|) +  dl^\flat (n,l)+  ds^\flat(l,s)\Big] +\alpha_3\; dl^\flat(l, \alpha^\sharp)=0\;.$$
The last two equations hold true if and only if the conditions of the lemma hold true.
$\blacksquare$
\end{lproof}
From eq.(\ref{chartverse}a) we see that while $\kappa$, the factor that allows transition from an adapted chart to the adapted frame, is determined by chosen frame, $u$ continues to be a free function. This feature is never lost in the null case.

So far, the foliation adapted frame has been very abstract. We will now make a specific choice using our standard adapted chart. Since $M^4 = 0$ in the null case, and $M^3$ is always identically zero, elevate $M^r$ to a vector by defining $M\in T(U_p)$ by
$M=M^\mu \partial_\mu = M^1 \partial_1 + M^2 \partial_2$, where $\{\partial_\mu\}$ refer to the tangent bases vectors in the standard adapted chart.
Let 
\begin{equation}
 \chi_M = g^{34} \;dx^3-g^{33} \;dx^4\;.
 \label{chimM}
\end{equation}
A simple calculation reveals that $\chi_M ^\sharp = M$, and so $\chi_M$ is a null dual vector whose associated tangent vector is a null vector in $\ker F$. I.e., $M$ is a good candidate for $l$ in a foliation adapted frame.
In the remainder of this section, we will specifically work with $M$ as the choice for $l$. I.e., the null foliation adapted frame takes the form $(s,M, \alpha^\sharp, n)$. 
Eq. (\ref{divfreeFFeq}) implies that, for $M$, $*d*M^\flat =0$.
\begin{lemma}
In the specific null foliation adapted frame
$(s,M, \alpha^\sharp, n)$, $*d*M^\flat =0$ if and only if
\begin{equation}
    dM^\flat(n,M)+d\alpha(M,\alpha^\sharp)+ds^\flat(M,s)=0\;. 
    \label{divmvanish}
\end{equation}
\end{lemma}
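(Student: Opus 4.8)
The plan is to compute the three-form $*M^\flat$ explicitly in the adapted frame, take its exterior derivative, and read off the single independent component of the resulting top-form. First I would use the frame metric eq.(\ref{frameg}) to Hodge-dualize the null one-form $M^\flat$. In the coframe dual to $(s,M,\alpha^\sharp,n)$ the form $M^\flat$ pairs with $n$ alone, so the same bookkeeping that produced $*(\alpha\wedge l^\flat)=l^\flat\wedge s^\flat$ in the proof of Theorem \ref{geoexistuninull} gives
$$*M^\flat = \pm\, s^\flat\wedge\alpha\wedge M^\flat,$$
where the sign is fixed by the chosen orientation and is immaterial below. Because the Hodge star is a pointwise linear isomorphism on forms, a nonzero top-form is never annihilated by $*$, so $*d*M^\flat=0$ is equivalent to $d*M^\flat=0$. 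It therefore suffices to show that the four-form $d*M^\flat$ vanishes precisely when eq.(\ref{divmvanish}) holds.

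Next I would expand $d*M^\flat = d(s^\flat\wedge\alpha\wedge M^\flat)$ by Leibniz into the three terms
$$ds^\flat\wedge\alpha\wedge M^\flat \;-\; s^\flat\wedge d\alpha\wedge M^\flat \;+\; s^\flat\wedge\alpha\wedge dM^\flat,$$
and extract the coefficient of the volume element $s^\flat\wedge\alpha\wedge M^\flat\wedge n^\flat$. The key observation is that each wedge leaves exactly one slot open, namely the $n^\flat$ direction: wedging $ds^\flat$ against $\alpha\wedge M^\flat$ retains only its $s^\flat\wedge n^\flat$ component, wedging $d\alpha$ between $s^\flat$ and $M^\flat$ retains only its $\alpha\wedge n^\flat$ component, and wedging $dM^\flat$ against $s^\flat\wedge\alpha$ retains only its $M^\flat\wedge n^\flat$ component. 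Every other component of $ds^\flat$, $d\alpha$ and $dM^\flat$ repeats one of the fixed directions and hence drops out, so only one number is extracted from each term.

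Converting these three components into frame evaluations using the convention already calibrated in eq.(\ref{dsfalt}) — the coefficient of a wedge of the physical coframe elements is the value of the two-form on the corresponding frame vectors, with the signs inherited from $g$ in eq.(\ref{frameg}) — the three contributions become $ds^\flat(M,s)$, $d\alpha(M,\alpha^\sharp)$ and $dM^\flat(n,M)$ respectively (these are, up to sign, the structure coefficients $A_n,C_n,\dots$ of eqs.(\ref{dflat})--(\ref{flatal})). Summing them and setting the coefficient of the volume form to zero yields exactly
$$dM^\flat(n,M)+d\alpha(M,\alpha^\sharp)+ds^\flat(M,s)=0,$$
which is eq.(\ref{divmvanish}); reading the steps backwards gives the converse.

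I expect the main obstacle to be purely the sign and index bookkeeping of the Hodge star in the off-diagonal null frame. Because the metric eq.(\ref{frameg}) mixes $M$ and $n$, ``the $X\wedge Y$ component'' of a two-form is its evaluation on the dual frame vectors rather than a naive evaluation, and keeping this dictionary consistent (exactly as fixed by eq.(\ref{dsfalt})) is where an error is most likely to creep in. Once the dictionary and the orientation sign are pinned down, the remainder is a mechanical wedge-product computation.
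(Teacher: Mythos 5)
Your proposal is correct and follows essentially the same route as the paper's own proof: both compute $*M^\flat$ as the 3-form $M^\flat\wedge\alpha\wedge s^\flat$ (up to an immaterial orientation sign), expand $d*M^\flat$ by Leibniz, and identify the single top-form coefficient with $dM^\flat(n,M)+d\alpha(M,\alpha^\sharp)+ds^\flat(M,s)$ via the frame dictionary of eqs.(\ref{dsfalt}) and (\ref{duns}). The only cosmetic differences are that you argue directly which components of $ds^\flat$, $d\alpha$, $dM^\flat$ survive each wedge instead of invoking the structure equations (\ref{dflat})--(\ref{flatal}), and you make explicit the step (left implicit in the paper) that $*$ being a pointwise isomorphism reduces $*d*M^\flat=0$ to $d*M^\flat=0$.
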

\begin{lproof}
From eq.(\ref{frameg}), we get that
$*M^\flat=M^\flat \wedge \alpha \wedge s^\flat$.
Then  from eq.(\ref{dflat}) and (\ref{flatal})
$$d*M^\flat= M^\flat\wedge n^\flat  \wedge \alpha \wedge s^\flat \;[A_n+D_n-ds^\flat _{ns}]=0\;.$$
Eq.(\ref{dflat}) implies that $A_n=dM^\flat(n,M)$. $ds^\flat _{ns}$ and $D_n$ are given by eqs.(\ref{dsfalt}) and (\ref{duns}). Inserting the expressions for the individual terms in the equation above gives us the needed conclusion.
$\blacksquare$
\end{lproof}
We have already picked a fixed choice for $l$ for the foliation adapted frame. Now we settle on a particular choice of $\alpha$. Assuming that $g^{33} \neq 0$, set 
\begin{equation}
    \alpha = \frac{-1}{\sqrt{g^{33}}} dx^3\;.
    \label{alphadef}
\end{equation}
If $g^{33} = 0$, we can always use the unit 1-form proportional to $dx^4$, or simply relabel the coordinates by $x^3 \leftrightarrow x^4$ since both $g^{33}$ and $g^{44}$ cannot be $0$ at the same time.
The ``-" sign is included only to make $\kappa >0$.
\begin{lemma}
With the above choice of $\alpha$, eq.(\ref{divmvanish}) is true if and only if eq.(\ref{chartverse}a) is true.
\end{lemma}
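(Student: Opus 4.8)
The plan is to exploit the fact that in this specific frame $l=M$, so eq.(\ref{chartverse}a) reads $M(\ln|\kappa|)+dM^\flat(n,M)+ds^\flat(M,s)=0$, while eq.(\ref{divmvanish}) reads $dM^\flat(n,M)+d\alpha(M,\alpha^\sharp)+ds^\flat(M,s)=0$. The two expressions contain the terms $dM^\flat(n,M)$ and $ds^\flat(M,s)$ verbatim, so the claimed equivalence collapses to the single scalar identity
$$M(\ln|\kappa|)=d\alpha(M,\alpha^\sharp)\,.$$
Everything therefore reduces to evaluating the right-hand side for the particular $\alpha$ fixed in eq.(\ref{alphadef}) and matching it against the logarithmic derivative of $\kappa$.

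First I would pin down $\kappa$ explicitly for this frame. From eq.(\ref{chimM}) one has $M^\flat=\chi_M=g^{34}\,dx^3-g^{33}\,dx^4$, so in the notation of eq.(\ref{chart2frame}) the components are $l^\flat_3=g^{34}$ and $l^\flat_4=-g^{33}$, while the choice eq.(\ref{alphadef}) gives $\alpha_3=-1/\sqrt{g^{33}}$ and $\alpha_4=0$. Substituting into eq.(\ref{kappadef}) yields $\kappa=(\alpha_3\,l^\flat_4-\alpha_4\,l^\flat_3)^{-1}=1/\sqrt{g^{33}}$, which simultaneously explains the sign convention chosen in eq.(\ref{alphadef}) and confirms $\kappa>0$. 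In particular $\alpha=-\kappa\,dx^3$ and $1/\kappa=\sqrt{g^{33}}$.

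Next I would compute $d\alpha(M,\alpha^\sharp)$ directly. Writing $\alpha=-\kappa\,dx^3$ gives $d\alpha=-d\kappa\wedge dx^3$, hence $d\alpha(M,\alpha^\sharp)=-\big[M(\kappa)\,dx^3(\alpha^\sharp)-(\alpha^\sharp)(\kappa)\,dx^3(M)\big]$. The key simplification is that $M=M^1\partial_1+M^2\partial_2$ carries no $x^3$ or $x^4$ component, so $dx^3(M)=0$ and the second term drops out. For the first term, raising the index with the inverse metric of eq.(\ref{frameg}) gives $dx^3(\alpha^\sharp)=(\alpha^\sharp)^3=g^{33}\alpha_3=-\kappa g^{33}=-1/\kappa$, which is consistent with the normalization $g(\alpha,\alpha)=1$. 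Therefore $d\alpha(M,\alpha^\sharp)=M(\kappa)/\kappa=M(\ln|\kappa|)$, which is precisely the identity required, and the stated equivalence follows.

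The main obstacle, such as it is, is the careful bookkeeping in the wedge-product evaluation: one must track the antisymmetrization of $d\kappa\wedge dx^3$ and recognize that the annihilation $dx^3(M)=0$ — a direct consequence of $M$ lying in the $\partial_1,\partial_2$ plane of the adapted chart — is exactly what removes the would-be extra term and leaves a clean logarithmic derivative. The only secondary check is sign consistency, which is guaranteed by cross-verifying $(\alpha^\sharp)^3=-1/\kappa$ against the unit normalization $g(\alpha,\alpha)=1$.
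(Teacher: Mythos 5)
Your proposal is correct and follows essentially the same route as the paper: both reduce the equivalence to the single identity $M(\ln|\kappa|)=d\alpha(M,\alpha^\sharp)$, compute $\kappa=1/\sqrt{g^{33}}$ from eqs.(\ref{kappadef}), (\ref{chimM}) and (\ref{alphadef}), and verify the identity by a chart computation whose crux is $M^3=0$; your evaluation of $d\alpha$ via $\alpha=-\kappa\,dx^3$ is just a streamlined version of the paper's component-by-component calculation. One small slip worth fixing: the index raising $(\alpha^\sharp)^3=g^{33}\alpha_3$ uses the inverse metric components of the adapted chart, not the frame-form metric of eq.(\ref{frameg}), which gives components only in the basis $(s,l,\alpha^\sharp,n)$.
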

\begin{lproof}
All that remains to be shown is that
$$M(\ln |\kappa|)=d\alpha(M, \alpha^\sharp)\;.$$
From eqs.(\ref{kappadef}), (\ref{alphadef}) and our choice of $l$, we get that
$$\kappa = \frac{1}{\sqrt{g^{33}}}\;.$$
Then
$$M(\ln|\kappa|)= \frac{-1}{\sqrt{g^{33}}} M\left(\sqrt{g^{33}}\right)\;.$$
On the other hand,
$$d\alpha(M, \alpha^\sharp)=M^\mu \alpha^\nu \;(\partial_\mu \alpha_\nu) - \alpha^\mu M^\nu\;(\partial_\mu \alpha_\nu)\;.$$
But $\alpha_\nu=0$ whenever $\nu \neq 3$, and $M^\nu =0$ when $\nu =3$. Therefore,
$$d\alpha(M, \alpha^\sharp)=M^\mu \alpha^\nu \;(\partial_\mu \alpha_\nu)=\alpha^3 M(\alpha_3)$$
$$=\frac{-1}{\sqrt{g^{33}}} M\left(\sqrt{g^{33}}\right)= M(\ln|\kappa|)\;.$$
$\blacksquare$
\end{lproof}
Since $N^r \propto M^r$, we will elevate $N^r$ to the status of a vector field as well.
Now we have to seek out the role played by the vanishing of the divergence of $N^r$. A casual examination reveals that when $g^{44} \neq 0$,
$$N= \frac{g^{44}}{g^{34}} \;M\;.$$
Note that if $g^{44} \neq 0$, we must necessarily have that $g^{34} \neq 0$. Since $M$ is divergence free,
\begin{equation}
  \nabla_\mu N^\mu =0 \iff M\left(\frac{g^{44}}{g^{34}}\right)=0\;.  
  \label{chartdivn=0}
\end{equation}
If $g^{44} = 0$, then $g^{34} =0$, and $N=0$.
\begin{lemma} When $l=M$, and $\alpha$ is as given by eq.(\ref{alphadef}),
$$ \nabla_\mu N^\mu=0$$ if and only if eq.(\ref{chartverse}b) is true.
\label{geolem}
\end{lemma}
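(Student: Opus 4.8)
The plan is to reduce the claim to the immediately preceding result, eq.(\ref{chartdivn=0}), by computing $dM^\flat(M,\alpha^\sharp)$ explicitly in the standard adapted chart and showing that it is a nowhere-vanishing multiple of $M\!\left(g^{44}/g^{34}\right)$. Since eq.(\ref{chartverse}b) reads $dl^\flat(l,\alpha^\sharp)=0$ and here $l=M$, establishing the identity $dM^\flat(M,\alpha^\sharp)=-(g^{33})^{3/2}\,M\!\left(g^{44}/g^{34}\right)$ finishes the proof at once.

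First I would record the two ingredients already in hand: from eq.(\ref{chimM}) one has $M^\flat=g^{34}\,dx^3-g^{33}\,dx^4$, so $M^\flat$ is supported on the $dx^3,dx^4$ legs, while $M=M^1\partial_1+M^2\partial_2$ has $M^3=M^4=0$; and from eq.(\ref{alphadef}), $\alpha=-(g^{33})^{-1/2}dx^3$ gives $(\alpha^\sharp)^3=-\sqrt{g^{33}}$ and $(\alpha^\sharp)^4=-g^{34}/\sqrt{g^{33}}$. Writing, in the paper's convention, $dM^\flat(M,\alpha^\sharp)=M^\mu(\alpha^\sharp)^\nu\big(\partial_\mu (M^\flat)_\nu-\partial_\nu (M^\flat)_\mu\big)$, the second term drops because $(M^\flat)_\mu$ is supported on $\mu=3,4$ where $M^\mu$ vanishes, and the survivor collapses to a pure directional derivative along $M$:
\[
dM^\flat(M,\alpha^\sharp)=(\alpha^\sharp)^3\,M(g^{34})-(\alpha^\sharp)^4\,M(g^{33})=\frac{-1}{\sqrt{g^{33}}}\Big[g^{33}M(g^{34})-g^{34}M(g^{33})\Big].
\]
Recognizing the bracket as $(g^{33})^2\,M\!\left(g^{34}/g^{33}\right)$ yields $dM^\flat(M,\alpha^\sharp)=-(g^{33})^{3/2}\,M\!\left(g^{34}/g^{33}\right)$. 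The same cancellation can be stated invariantly as $dM^\flat(M,\alpha^\sharp)=M(g(M,\alpha^\sharp))-\alpha^\sharp(g(M,M))-M^\flat([M,\alpha^\sharp])$, whose first two terms vanish since $M$ is null and $\alpha(M)=0$.

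Next I would invoke the null constraint. Because $M^4=(g^{34})^2-g^{33}g^{44}=0$ is exactly eq.(\ref{detkerform}), we have $g^{34}/g^{33}=g^{44}/g^{34}$ whenever $g^{34}\neq0$, so the directional derivative above equals $M\!\left(g^{44}/g^{34}\right)$. Since $g^{33}\neq0$ by the choice in eq.(\ref{alphadef}), the prefactor $(g^{33})^{3/2}$ never vanishes, whence $dM^\flat(M,\alpha^\sharp)=0\iff M\!\left(g^{44}/g^{34}\right)=0$, which by eq.(\ref{chartdivn=0}) is $\nabla_\mu N^\mu=0$. As $dM^\flat(M,\alpha^\sharp)=dl^\flat(l,\alpha^\sharp)$, this is precisely the asserted equivalence with eq.(\ref{chartverse}b). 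In the degenerate branch $g^{44}=0$ one has $g^{34}=0$ and $N=0$, so $\nabla_\mu N^\mu=0$ holds trivially; there $M^\flat=-g^{33}dx^4$ and $(\alpha^\sharp)^4=0$, so the boxed computation again returns $dM^\flat(M,\alpha^\sharp)=0$, keeping the two sides consistent.

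The main obstacle is the reduction carried out in the second paragraph: one must verify that contracting $dM^\flat$ against $M$ and $\alpha^\sharp$ leaves only $M$-derivatives of the metric coefficients, with no $\partial_3,\partial_4$ contributions. This hinges entirely on the complementary support of $M^\flat$ (legs $3,4$) and of $M$ (components $1,2$), together with the null relation eq.(\ref{detkerform}) used to convert the ratio $g^{34}/g^{33}$ into the $g^{44}/g^{34}$ appearing in eq.(\ref{chartdivn=0}). The only place demanding separate care is the $g^{44}=0$ branch, and it is dispatched by the very same formula.
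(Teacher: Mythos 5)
Your proof is correct and takes essentially the same route as the paper's: a direct computation in the adapted chart showing that $dM^\flat(M,\alpha^\sharp)$ collapses, because $M^\flat$ lives on the $dx^3,dx^4$ legs while $M$ has only $\partial_1,\partial_2$ components, to a nonvanishing multiple of $M\!\left(g^{44}/g^{34}\right)$ (via the null condition of eq.(\ref{detkerform})), after which eq.(\ref{chartdivn=0}) gives the equivalence, with the $g^{44}=0$ branch dispatched trivially exactly as in the paper. The only cosmetic difference is that the paper writes the key bracket as $-(g^{34})^2\,M\!\left(g^{33}/g^{34}\right)$ where you write $(g^{33})^2\,M\!\left(g^{34}/g^{33}\right)$; these are equivalent under the same null relation, and your version makes the final appeal to eq.(\ref{chartdivn=0}) slightly more explicit.
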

\begin{lproof} The result follows from a straightforward calculation. First, let $g^{44} \neq 0$.
$$dM^\flat (\alpha^\sharp, M)= -\alpha^\nu M(M_\nu)$$
$$=\frac{g^{33}}{\sqrt{g^{33}}} M(M_3)+\frac{g^{34}}{\sqrt{g^{33}}} M(M_4)\;.$$
Or,
$$g^{33} M(g^{34})-g^{34} M(g^{33})=- (g^{34})^2\;M\left(\frac{g^{33}}{g^{34}}\right)\;.$$ 
Eq.(\ref{chartdivn=0}) now gives us the needed result when $g^{44} \neq 0$.
On the other hand if $g^{44} = 0$, clearly $N$ is trivially divergence free and 
$$dM^\flat (\alpha^\sharp, M)=\frac{g^{33}}{\sqrt{g^{33}}} M(M_3)\;.$$
From eq.(\ref{chimM}), since $\chi_M = M^\flat$, we get that $M_3 =0$. Once again the lemma holds true.
$\blacksquare$
\end{lproof}
\begin{theorem}
Theorems \ref{existuninull}, \ref{geoexistuninull}, and \ref{existunifianl} are equivalent.
\end{theorem}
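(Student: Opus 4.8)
The plan is to treat the final theorem as a bookkeeping assembly of the lemmas already proved in this section, since all of the analytic work is behind us. First I would dispose of the equivalence of Theorem \ref{geoexistuninull} (Version 2) and Theorem \ref{existunifianl} (Version 3), which is immediate: Theorem \ref{pregeolem} establishes that eq.(\ref{cond}b) holds if and only if $l$ is a pregeodesic vector field, and Lemma \ref{lemequi} establishes that eq.(\ref{cond}a) holds if and only if ${\cal F}$ admits an equipartition of null mean curvature with respect to $l$. Thus the entire remaining content is to show that the chart condition of Theorem \ref{existuninull} (Version 1), namely $\nabla_r M^r = 0 = \nabla_r N^r$, is equivalent to the frame conditions eq.(\ref{cond}).

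To do this I would exploit frame-independence. Because eq.(\ref{cond}a) and eq.(\ref{cond}b) are invariant under the frame redundancies (\ref{lntrans}) and (\ref{satrans1}), it suffices to verify the equivalence in the single distinguished frame $(s, M, \alpha^\sharp, n)$ with $l = M$ and $\alpha$ chosen as in eq.(\ref{alphadef}); this is legitimate precisely because $\chi_M$ is null and $M \in \ker F$, so $M$ is an admissible representative of $l$. I would then chain together the lemmas for $M$: the condition $\nabla_r M^r = 0$ is exactly the statement $*d*M^\flat = 0$ for the vector $M$, the lemma establishing eq.(\ref{divmvanish}) turns this into eq.(\ref{divmvanish}), and the subsequent lemma---using the identity $M(\ln|\kappa|) = d\alpha(M,\alpha^\sharp)$ valid for the choice (\ref{alphadef})---collapses eq.(\ref{divmvanish}) to eq.(\ref{chartverse}a). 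In parallel, Lemma \ref{geolem} gives $\nabla_r N^r = 0$ if and only if eq.(\ref{chartverse}b), and eq.(\ref{chartverse}b) is literally eq.(\ref{cond}b).

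To close the loop I would read two characterizations of force-freeness side by side in this frame. Lemma \ref{lemcharFF} asserts that the chart force-free condition $J \wedge dx^3 = 0 = J \wedge dx^4$ is equivalent to the pair eq.(\ref{chartverse}a) and eq.(\ref{chartverse}b), while Theorem \ref{geoexistuninull} asserts that force-freeness is equivalent to the pair eq.(\ref{cond}a) and eq.(\ref{cond}b). Combining these with the chain above yields that the pair $\nabla_r M^r = 0$, $\nabla_r N^r = 0$ is equivalent to the pair eq.(\ref{chartverse}a), eq.(\ref{chartverse}b), which is in turn equivalent to the pair eq.(\ref{cond}a), eq.(\ref{cond}b). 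This is precisely the statement that Theorem \ref{existuninull} and Theorem \ref{geoexistuninull} are equivalent; together with the first paragraph, all three are then equivalent.

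I expect the main obstacle to be conceptual hygiene rather than computation. The delicate point is that the free function $u$ must be shown never to enter any of the conditions---as the null case always carries this freedom---so that the equivalences are genuinely statements about the geometry alone, while the frame-dependent factor $\kappa$ must be correctly reconciled between the chart and frame pictures. It is exactly the identity $M(\ln|\kappa|) = d\alpha(M,\alpha^\sharp)$ that makes eq.(\ref{chartverse}a), which superficially involves $\kappa$, coincide with the purely geometric eq.(\ref{cond}a); verifying that this matching is frame-robust, i.e. that checking the single adapted frame $l = M$ suffices by virtue of the invariance of eq.(\ref{cond}), is the step I would be most careful about.
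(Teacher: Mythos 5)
Your proposal is correct and follows essentially the same route as the paper: the paper's proof is precisely the chain of lemmas \ref{lemcharFF} through \ref{geolem} evaluated in the distinguished frame $(s, M, \alpha^\sharp, n)$ with $\alpha$ as in eq.(\ref{alphadef}), combined with theorem \ref{pregeolem} and lemma \ref{lemequi} for the Version 2/Version 3 equivalence. Your only addition is to make explicit the frame-robustness concern, which the paper defers to the theorem immediately following (invariance of eqs.(\ref{chartverse}) under eqs.(\ref{lntrans}) and (\ref{satrans1})); this is a presentational difference, not a different argument.
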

\begin{proof}
The result follows from lemmas \ref {lemcharFF} to \ref{geolem}.
$\blacksquare$
\end{proof}
When using a new adapted chart, we have that
$$\kappa \rightarrow \frac{\kappa}{D}\;,$$
where $D$ is as defined in eq. (\ref{ddef}).
Therefore, from eq. (\ref{utrans}) we get that
\begin{equation}
   u  \cdot \kappa \rightarrow (D \cdot u) \cdot \frac{\kappa}{D}= u  \cdot \kappa \;. 
   \label{ucapinv}
\end{equation}
\begin{theorem}
Eqs. (\ref{chartverse} a, and b) in lemma \ref{lemcharFF} do not depend on the adapted chart used nor in the freedom of choice in $s, l,\alpha$ and $n$ given by eqs.(\ref{lntrans}) and (\ref{satrans1}).
\end{theorem}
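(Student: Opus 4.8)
The plan is to treat the two equations separately, and for each to verify invariance under the chart change and under the two frame redundancies in eqs.(\ref{lntrans}) and (\ref{satrans1}). Throughout I would lean on the defining orthogonality relations of the adapted frame, namely $g(l,l)=0$, $g(l,s)=g(l,\alpha^\sharp)=0$, and $g(n,l)=-1$, together with the antisymmetry of every $2$-form (so that $\omega(l,l)=0$ for all the exterior derivatives appearing).

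Equation (\ref{chartverse}b) is the easiest. By Theorem \ref{pregeolem} it merely records that $l$ is pregeodesic, a property of the null direction field in $\ker F$ that makes no reference to a chart, so chart-independence is immediate. For the frame freedoms, a one-line computation using $dl^\flat(l,l)=0$ shows $dl^\flat(l,\alpha^\sharp+h\,l)=dl^\flat(l,\alpha^\sharp)$, so the shift $\alpha\to\alpha+h\,l^\flat$ leaves it untouched; and under $l\to f\,l$ the relations $g(l,\alpha^\sharp)=0=g(l,l)$ kill the $df\wedge l^\flat$ cross term, giving $d(f l^\flat)(f l,\alpha^\sharp)=f^2\,dl^\flat(l,\alpha^\sharp)$, whose zero set is unchanged since $f\neq0$.

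For (\ref{chartverse}a), I would first dispatch chart-independence. The only chart-dependent term is $l(\ln|\kappa|)$, the other two being purely geometric. Under a change of adapted chart $\kappa\to\kappa/D$, and since the text already establishes $\partial_{\bar x^i}D=0$ for $i=1,2$, the factor $D$ is constant along the field sheets; because $l\in T(\mathcal F_a)$ is tangent to the sheets, $l(D)=0$, so $l(\ln|\kappa|)$ is chart-invariant. Frame-independence then splits into the two moves. Under eq.(\ref{satrans1}) the $h\,l^\flat$ shift drops out of the determinant defining $\kappa$ in eq.(\ref{kappadef}), so $\kappa$ is unchanged; using $g(l,s)=0=g(l,l)$ one checks $ds^\flat(l,s)$ is unchanged as well; and expanding $dl^\flat$ against the shifted $n$ produces exactly a term $-h\,dl^\flat(l,\alpha^\sharp)$, i.e.\ $-h$ times the left side of (\ref{chartverse}b). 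Thus the left side of (a) changes only by a multiple of the left side of (b), so the joint system $\{(a),(b)\}$ is preserved.

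The step carrying the real content is the invariance of (a) under $l\to f\,l,\ n\to f^{-1}n$. Here $\kappa\to\kappa/f$ by eq.(\ref{kappatrans}), so after applying the rescaled derivation $f\,l$ the logarithmic term picks up an inhomogeneous piece $-f\,l(\ln|f|)=-l(f)$, while the $df\wedge l^\flat$ part of $d(f l^\flat)$ evaluated on $(f^{-1}n,\,f l)$ contributes precisely $+l(f)$ via $g(n,l)=-1$. These two derivative-of-$f$ terms cancel, the remaining terms scale homogeneously by $f$, and the whole left side of (a) transforms to $f$ times its original value, so its vanishing is preserved. This cancellation is the main obstacle, being the one place where the logarithmic-derivative term and the connection term conspire; everything else is routine bookkeeping with the frame relations. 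I would close by noting that the same conclusion also follows conceptually from Lemma \ref{lemcharFF}: eqs.(\ref{chartverse}) are equivalent to $J\wedge dx^3=0=J\wedge dx^4$, which, since $dx^3,dx^4$ span the same annihilator of $\ker F$ as $\alpha,l^\flat$, is just the coordinate-free force-free condition (\ref{FFwedgecond}) on the fixed fields $F$ and $J=d*F$, manifestly independent of any chart or frame choice.
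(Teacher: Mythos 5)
Your proposal is correct in substance, and its computational core is essentially the paper's own argument: for the rescaling in eq.(\ref{lntrans}) you isolate exactly the cancellation the paper exhibits, namely the inhomogeneous piece $-l(f)$ coming from $f\,l(\ln|\kappa/f|)$ against the $+l(f)$ produced by the $df\wedge l^\flat$ part of $d(fl^\flat)$ evaluated on $(n/f,\,fl)$ via $g(n,l)=-1$, after which the left side of (\ref{chartverse}a) scales by $f$. For chart-independence the paper instead invokes eq.(\ref{ucapinv}) (invariance of $u\cdot\kappa$ combined with $l(u)=0$), while you use $l(D)=0$; these rest on the same underlying facts and are equally valid. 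For eq.(\ref{satrans1}) the paper restricts to $h=0$ and kills the leftover terms $dg\wedge l^\flat(l,s)+g\,dl^\flat(l,s)$ by the integrability relation eq.(\ref{dflat}), whereas you treat general $(g,h)$ and conclude that the left side of (a) shifts by $-h\,dl^\flat(l,\alpha^\sharp)$, a multiple of (b), so the joint system is preserved; this is actually a cleaner statement than the paper's, since once $h\neq 0$ the shift of $n$ by $h\,\alpha^\sharp$ forces one to use (b) anyway, a point the paper glosses over.

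One step in your bookkeeping is not justified as written: the claim that ``$ds^\flat(l,s)$ is unchanged'' by orthogonality alone is false, since $d(s+g\,l)^\flat(l,\,s+g\,l)=ds^\flat(l,s)+g\,dl^\flat(l,s)$ (only the $dg\wedge l^\flat$ term dies by orthogonality). The extra term $g\,dl^\flat(l,s)$ must be removed either by eq.(\ref{dflat}), as the paper does, or by noticing it cancels identically against the term $+g\,dl^\flat(s,l)$ that you also dropped when expanding $dl^\flat$ against the shifted $n$. Because these two omissions cancel by antisymmetry, your net conclusion --- (a) changes by $-h$ times (b) --- is exactly what the full computation gives, but each intermediate claim needs repair. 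Finally, your closing observation is worth keeping: by lemma \ref{lemcharFF}, eqs.(\ref{chartverse}) are equivalent to the force-free condition $J\wedge dx^3=0=J\wedge dx^4$, i.e.\ to eq.(\ref{FFwedgecond}) applied to the fixed $2$-form $F$ and its current $J=d*F$; since the paper has already shown $F$ itself is invariant under all the changes in question, this gives a one-line conceptual proof of the theorem that is genuinely different from, and more economical than, the paper's term-by-term verification.
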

\begin{proof}
Eq. (\ref{ucapinv}) clearly implies that
eqs.(\ref{chartverse}a) and (\ref{chartverse}b) remain true under an adapted chart transformation.
Now consider the transformation in eq.(\ref{lntrans}). Then from eq.(\ref{kappatrans}) we see that eq.(\ref{chartverse}a) becomes
$$f\;l \ln\left(\frac{\kappa}{f}\right) + d (f\;l^\flat) \left(\frac{n}{f}, f\;l\right) + ds^\flat(f\;l,s)$$
$$=f \;\left[l \ln \kappa + d l^\flat (n, l) + ds^\flat(l,s)\right]-f\;\ln f + df \wedge l^\flat\; (n,l)$$
$=-f\;\ln f +l(f)=0\;.$

\vskip0.2in\noindent I.e., eq. (\ref{chartverse}a) holds true under the transformation given in  eq. (\ref{lntrans}). Since eq. (\ref{chartverse}a) does not depend on $\alpha$ consider the transformation in eq.(\ref{satrans1}) when $h=0$. Then,
\vskip0.2in \noindent 
$l \ln \kappa + d l^\flat \left(n+\frac{g^2}{2}\;l+g\;s, l\right) + d(s^\flat+g\;l^\flat)(l,s + g\;l)$
\vskip0.2in \noindent 
$=l \ln \kappa + d l^\flat (n, l) + ds^\flat(l,s)+dg\wedge l^\flat (l,s) +g\;dl^\flat(l,s)=0$\vskip0.2in\noindent Eq.(\ref{dflat}) was used in obtaining the final equality above. A similar calculation reveals that eq.(\ref{chartverse}b) is not affected by picking a different adapted chart or by transformations in   eqs.(\ref{lntrans}) and (\ref{satrans1}).
$\blacksquare$
\end{proof}
\section{Currents in a Null and Force-Free Field}
Note that
$$* \;l^\flat \wedge \alpha \wedge n^\flat = s^\flat\;,$$
and
$$*\; s^\flat \wedge l^\flat  \wedge \alpha  = l^\flat.$$
Then, from eq.(\ref{Jexplicit}) and (\ref{chartverse} a), it is straight forward to write down the expression for $j$. The current density vector $j$ of a null force-free field in a null foliation adapted frame is given by
\vskip0.2in\noindent
$j=-(u \cdot \kappa)\; \times$
\begin{equation}
  \big[\alpha^\sharp(\ln u \cdot \kappa)+\;dl^\flat(n,\alpha)+ ds^\flat(\alpha^\sharp, s)\big] l^\flat+\;ds^\flat(l,\alpha^\sharp)\;s^\flat\;. 
  \label{jfinal}
\end{equation}
In general, a null, force-free field does not require the current to flow along the null geodesics of the foliation. 
\begin{theorem}
Let ${\cal F}$ be a null field sheet foliation of ${\cal M}$, and let $(s, l, \alpha^\sharp, n)$ be a null foliation adapted frame for ${\cal F}$. Let $F$ be a particular solution in the adapted frame. Then
\begin{itemize}
    \item the current density vector $j$ for the null, force-free field $F$  is along the null pregeodesic $l$ if and only if 
\begin{equation}
  ds^\flat (l, \alpha^\sharp)=0\;.  
  \label{geocur}
\end{equation}
\item if further
\begin{equation}
   \alpha^\sharp(\ln u \cdot \kappa)+\;dl^\flat(n,\alpha)+ ds^\flat(\alpha^\sharp, s)=0\;, \label{geocomp}
\end{equation} then $F$ describes a vacuum solution.
\end{itemize}
 \label{currentcomp}
\end{theorem}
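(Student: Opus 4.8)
The plan is to extract both statements directly from the explicit current formula already derived in eq.(\ref{jfinal}), using only that the one-forms $l^\flat$ and $s^\flat$ are linearly independent. First I would rewrite eq.(\ref{jfinal}) schematically as
$$ j = -(u\cdot\kappa)\,\big[\,P\,l^\flat + Q\,s^\flat\,\big]\;, $$
with the abbreviations $P = \alpha^\sharp(\ln u \cdot \kappa) + dl^\flat(n,\alpha) + ds^\flat(\alpha^\sharp,s)$ and $Q = ds^\flat(l,\alpha^\sharp)$. Raising the index gives the current density vector $j^\sharp = -(u\cdot\kappa)\,[\,P\,l + Q\,s\,]$, and since $(s,l,\alpha^\sharp,n)$ is a frame, $l$ and $s$ are linearly independent.

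For the first bullet I would observe that, by this independence, $j^\sharp$ is a scalar multiple of $l$ exactly when its $s$-component vanishes. For any genuine null field $u\cdot\kappa\neq 0$, so the $s$-component $-(u\cdot\kappa)Q$ vanishes if and only if $Q=0$, that is, if and only if eq.(\ref{geocur}) holds. This yields the claimed equivalence, and it automatically covers the degenerate case $j=0$, which is trivially aligned with $l$ and for which $Q=0$ as well.

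For the second bullet I would work under the hypothesis of the first, so that eq.(\ref{geocur}) has already forced $Q=0$ and removed the $s^\flat$ term from $j$. The additional hypothesis eq.(\ref{geocomp}) is precisely the statement $P=0$, which removes the surviving $l^\flat$ term. Substituting $P=Q=0$ into the displayed expression gives $j=0$; equivalently $d*F=0$, so $F$ satisfies the source-free Maxwell equations and therefore describes a vacuum solution.

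There is no substantial analytic obstacle here, since the heavy lifting --- assembling eq.(\ref{jfinal}) from eq.(\ref{Jexplicit}), the condition eq.(\ref{chartverse}a), and the Hodge identities $*(l^\flat\wedge\alpha\wedge n^\flat)=s^\flat$ and $*(s^\flat\wedge l^\flat\wedge\alpha)=l^\flat$ --- is already carried out before the theorem is stated. The only points that genuinely need care are justifying that the $l^\flat$ and $s^\flat$ coefficients may be annihilated independently, which rests on the linear independence of $l$ and $s$ within the adapted frame, and noting that $u\cdot\kappa\neq 0$ for a non-trivial field so that alignment with $l$ is governed solely by the vanishing of $Q$.
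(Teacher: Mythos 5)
Your proof is correct, and its core step --- reading off the $l^\flat$ and $s^\flat$ coefficients in eq.(\ref{jfinal}) and invoking $u\cdot\kappa \neq 0$ together with the linear independence of $l$ and $s$ --- is precisely the paper's opening move, which the paper compresses into the single sentence ``From eq.(\ref{jfinal}) we see that the requirements of the theorem are necessary.'' Where you genuinely diverge is in what the paper treats as the bulk of the proof: almost all of its argument is spent verifying that the two conditions, eqs.(\ref{geocur}) and (\ref{geocomp}), are preserved under an adapted chart transformation (via eq.(\ref{ucapinv})) and under the frame redundancies of eqs.(\ref{lntrans}) and (\ref{satrans1}), computations that in turn use eq.(\ref{cond}b) and eq.(\ref{chartverse}a). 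You omit this entirely, and for the theorem as literally stated --- one fixed adapted frame --- the omission is harmless; indeed your frame-wise ``if and only if'' delivers the invariance as a corollary, since ``$j$ is along $l$'' and ``$j=0$'' make no reference to the auxiliary data $(s, \alpha, n, \kappa, u)$, so applying your equivalence in two different frames shows that eq.(\ref{geocur}) (and, given it, eq.(\ref{geocomp}), which your argument shows is then equivalent to $j=0$) holds in one frame exactly when it holds in the other. What the paper's explicit computations buy is a direct, self-contained demonstration that the conditions are geometric statements about $F$ and the foliation rather than artifacts of a highly non-unique frame choice; what your route buys is brevity, plus the observation that this invariance is a logical consequence of the per-frame equivalence rather than an independent fact needing verification. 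One small point of care: the display of eq.(\ref{jfinal}) is typographically ambiguous as to whether $-(u\cdot\kappa)$ multiplies both terms or only the $l^\flat$ term, but your conclusion is insensitive to the reading, since in either case the $s^\flat$ coefficient vanishes iff $ds^\flat(l,\alpha^\sharp)=0$.
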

\begin{proof}
From eq. (\ref{jfinal}) we see that the requirements of the theorem are necessary.
Eq.(\ref{ucapinv}) once again shows that the same equations hold under an adapted chart transformation.
Under a transformation given by eq.(\ref{satrans1}), eq.(\ref{geocur}) becomes
\vskip0.2in \noindent
$d(s+g\;l)^\flat (l, \alpha^\sharp+ h\;l)$
$$=ds^\flat(l,\alpha^\sharp)+g\;dl^\flat(l, \alpha^\sharp) + dg \wedge l^\flat(l, \alpha^\sharp)=0\;.$$
The final equality above was obtained using eqs.(\ref{geocur}) and (\ref{cond}b). Clearly eq. (\ref{geocur}) is also true under the transformation given by eq. (\ref{lntrans}).
\vskip0.2in
To show that eq. (\ref{geocomp}) remains true under an adapted frame transformation consider first the case in eq. (\ref{satrans1}) when $g=0$. In this case
$$(\alpha^\sharp +h \;l)\;(\ln|u \cdot \kappa|) + dl^\flat\left(n+\frac{h^2}{2}\;l + h\;\alpha^\sharp,\; \alpha^\sharp + h\;l\right)$$
$$+\;ds^\flat(\alpha^\sharp + h\;l,s)$$
$$=(1+h)\;\big[l(\ln |\kappa|) + dl^\flat (n,l)+ ds^\flat(l,s)\big]+\frac{h^2}{2}dl^\flat (\alpha^\sharp, l)=0\;.$$
The final equality above was obtained using eq.(\ref{chartverse}a) and eq.(\ref{cond}b).
The other remaining cases can be shown in similar manner, and we omit the details for brevity.
$\blacksquare$
\end{proof}

\section{Examples}

\subsection*{A Null Force-Free Field in FLRW Cosmology}
 For concreteness, we set the sectional curvature of spacetime, $k = +1$. In the hyperspherical coordinate system $(t,r,\theta,\varphi)$,
 $$g = -dt^2 + a^2(t) \big[dr^2+\sin^2 r\; d\Omega^2\big]\;.$$
Here $d\Omega^2$ is the standard metric on the unit $2$-sphere.
Our calculation here does not depend on the explicit choice for $a(t)$, and so the matter content of the universe remains unfixed.  Let
$$l= a(t)\; \partial_t +\partial_r\;.$$ Then $l$ is a null vector field, and moreover
$$\nabla_l l = 2 \dot a\; l\;.$$
Here $\dot a$ refers to derivative of $a(t)$ with respect to our time coordinate $t$. I.e., $l$ is a pregeodesic congruence. Clearly,
$$l^\perp = {\rm span}\;\;\{l, \partial_\theta, \partial_\varphi\}\;.$$
 Let
 $$s= \frac{\partial_\theta}{a \sin r}\;, \;\;{\rm and}\;\; \alpha^\sharp = \frac{\partial_\varphi}{ a \sin r \sin\theta}\;.$$
 By construction, we are looking for a null force-free field $F$, whose kernel consists of the span of $l$ and $s$. 
Since
 $$g(\nabla_s l,s)= \;\cot r + \dot a\; =g(\nabla_{\alpha^\sharp} l,\alpha^\sharp)\;,$$
 foliations generated by submanifolds whose tangent spaces are spanned by $l$ and $s$ admits an equipartition of null mean curvature. Consequently, theorem \ref{anotherv} guarantees the existence of a null force-free field. To identify the variables $x^3$ and $x^4$ in $u$, we begin by constructing a foliation adapted chart. To this end define commuting vector fields
 $$X_1 = l, \; X_2 = \partial_\theta, \; X_3 = \partial_\varphi, \;{\rm and}\;\;X_4 = \partial_r\;.$$
 Let $\{x^i\}$ be a chart such that $\partial_{x^i} = X_i$. Then
 $$dx^1= dt/a, \;dx^2 = d\theta, \;dx^3 = d\varphi\;,$$
 and
 $$dx^4 = -\frac{dt}{a} + dr\;.$$
 Here the force-free null field takes the form (eq.(\ref{finalformF}))
 $$F=u(\varphi, x^4) \left(\frac{-dt}{a} +dr\right) \wedge d\varphi\;,$$
 where $u(\varphi, x^4)= u(t,r,\varphi)$ is such that
 $$X_1(u) = a(t)\; \partial_t \;u + \partial_r\; u =0\;,$$
 and trivially
 $$X_2 (u)= \partial_\theta u =0\;.$$
 It will always be the case that the only remaining constraint $X_1(u)=0=X_2(u)$ is what allows for 2 parameters of freedom in choosing $u$. These are the class of solutions we have been referring to in the null, force-free case.
 Here $ds^\flat (l, \alpha^\sharp)=0$, and so from theorem \ref{currentcomp}, the current density vector is along $l$. A direct computation gives that
 $$j=\frac{u_{,\varphi}}{a^4 \sin^2 r \sin \theta}\;l\;,$$
 and indeed, as expect the solution is force-free.

\subsection*{The Generalized Dual Solutions Generated by $l= a(t)\; \partial_t +\partial_r$ in FLRW Spacetime}
It is easy to verify that
 $$g(\nabla_{\alpha^\sharp} \;l, s ) = 0 = g(\nabla_s \;l, \alpha^\sharp)\;.$$
 I.e., $l$ admits a uniform equipartition of null mean curvature, and so there is a possibility of further generalizing the previous solution. All that remains is the search for involutive distributions containing $l$.
 Define
 $$\hat s = A(t,r,\theta,\varphi)\; s+ B(t,r,\theta,\varphi)\;\alpha^\sharp\;.$$
 Then $l$ and $\hat s$ is involutive if and only if
 \begin{equation}
    A\; l(B)=B \;l(A)\;. 
    \label{invcons}
 \end{equation}
When the above condition holds, theorem \ref{hatsoltheo} guarantees a generalized solution of the form
$$\hat F = (u \cdot \hat \kappa) \;\hat\alpha \wedge l^\flat\;,$$ where from eq.(\ref{lrot}) we see that
 $$\hat \alpha = - B\; s^\flat + A \;\alpha= -aB\sin r\;d\theta + aA \sin r \sin \theta \;d\varphi\;,$$
 and
 $$l^\flat = -a\; dt + a^2 \;dr\;.$$
 Therefore, there must be solutions of the type
 $$\hat F = a \sin r \;(u \cdot \hat \kappa) \;(-B\;d\theta + A \sin \theta \; d\varphi) \wedge l^\flat\;.$$
 Since $l(\sin \theta) =0$, we can rewrite $A \sin \theta$ as $A$.
 It is not easy to write down the adapted chart in this case. Consequently, there is no real way to guess the form of $\hat \kappa$ or even the free variables $x^3$ and $x^4$ in $u$. Thankfully, this is not needed. We can subsume the factor $a \sin r \;(u \cdot \hat \kappa)$ into $A$ and $B$. Of course, a priori, there is no longer the need for $A$ and $B$ to satisfy eq.(\ref{invcons}). With the reassignments, we get that
 \begin{equation}
     \hat F = (-B\;d\theta + A \; d\varphi) \wedge l^\flat\;.
     \label{Fcos}
 \end{equation}
  Now we simply have to enforce Maxwell's equations. Noting that
  $$dl^\flat = 2 a \dot a\; dt \wedge dr\;, $$
  and taking the exterior derivative of $\hat F$ we get that
  $$0=d\hat F = dt \wedge dr \wedge d\theta \;(a^2 B_{,t} + a B_{,r} + 2 a \dot aB)$$
    $$- dt \wedge dr \wedge d\varphi \;(a^2 A_{,t} + a A_{,r} + 2 a \dot aA)$$
     $$- a \;dt \wedge d\theta \wedge d\varphi \;(B_{,\varphi} +  A_{,\theta})+a^2 \;dr \wedge d\theta \wedge d\varphi \;( B_{,\varphi} +  A_{,\theta})\;.$$
     This imposes the conditions that
     \begin{equation}
       l(A) = -2 \dot a A,\;{\rm and}\;\;l(B) = -2 \dot a B 
       \label{cosconst1}
     \end{equation}
     which is consistent with eq. (\ref{invcons}) and a new constraint that
     \begin{equation}
        B_{,\varphi} +  A_{,\theta} =0\;.
        \label{cosconst2}
     \end{equation}
     In a similar manner, enforcing the in-homogeneous Maxwell equation, we get that
     $$j  =\left [(B \sin \theta)_{,\theta} - \frac{A_{,\varphi}}{\sin \theta}\right] \;\frac{1}{a^2 \sin^2 r \sin \theta}\;l\;.$$
     I.e., we get that the solution given by eq.(\ref{Fcos}) is a force-free null field in a FLRW spacetime when $k=+1$, when $A$ and $B$ satisfy constraints eq.(\ref{cosconst1}) and (\ref{cosconst2}). Also, there exists no further generalizations where the kernel of the null field contains the vector field $l= a(t)\; \partial_t +\partial_r$.
     
\subsection*{The $F_{ls}$ solution in Outgoing Kerr Background}
Consider as a second example the outgoing principal null geodesic congruence of the Kerr geometry given by
$$l=\partial_{r^\star}\;.$$
The underlying coordinate functions are that of the outgoing Kerr-Schild spacetime denoted by $(t^\star , r^\star , \theta^\star,\varphi^\star)$. 
They are related to the Boyer-Lindquist coordinates
by the following relations:
$$r^\star  = r\;,\;\;\;\; \theta^\star  = \theta\;,$$

$$d t^\star = dt -
\frac{r^2+a^2}{\Delta}dr\;, \;\;\; {\rm and} \;\;\;\; d \varphi^\star  =
d\varphi - \frac{a}{\Delta}dr.$$
The ``$\star$" is placed on $r$ and $\theta$ so that no
confusions arise while performing coordinate transformations.
In $ K^\star$, the spacetime metric is given by
\begin{equation}
g^\star_{ \mu  \nu} = \left[\begin{array}{cccc}
z-1& -1& 0& -za\sin^2(\theta)\\
-1& 0& 0& a\sin^2(\theta)\\
0 & 0& \rho^2 & 0\\
-za\sin^2(\theta)& a\sin^2(\theta)& 0& \Sigma^2 \sin^2(\theta)/\rho^2\\
\end{array}\right] \;.
\label{kerrstar}
\end{equation}
Here,
$$z=\frac{2Mr}{\rho^2}\;,$$
$$\rho^2 = r^2 + a^2
\cos^2\theta\;,\;\;\;\Delta = r^2 -2 M r + a^2\;,$$
$$
\Sigma^2 = (r^2 + a^2)^2 -\Delta \; a^2 \sin^2\theta\;,
$$
and
$$\sqrt {- g^\star} = \rho^2 \sin\theta.$$
The time orientation in $K^\star$ is given by the null congruence $\partial_{r^\star}$ which is set to be future pointing.
Also, since $\theta=\theta^\star$, $d\theta = d\theta^\star$ and $\partial_\theta = \partial_{\theta^\star}$, we will not make a distinction between the two coordinates in what follows. It is easy to verify that in this case
$$l^\perp={\rm span}\; \{l, s, \alpha^\sharp\}\;,$$
where $s$ and $\alpha^\sharp$ are unit spacelike vectors given by
$$s=\frac{1}{\sin (\theta) \sqrt{\rho^2}}\;\big[a \sin^2(\theta)\; \partial_{t^\star} +  \partial_{\varphi^\star}\big]\;,$$
and
$$\alpha^\sharp = \frac{1}{ \sqrt{\rho^2}}\;\partial_\theta\;.$$

We will begin by re-deriving the generalization to  the solution first derived in \cite{MD07}. This generalization was subsequently found by \cite{BGJ13}. However we proceed in a manner consistent with our geometric formulation of the theory of null, force-free electromagnetic fields. In this case, the kernel of our solution denoted by $F_{ls}$ is given by 
$$\ker F_{ls} = {\rm span}\;\{l, s\}\;.$$
Since $l$ is a null pregeodesic, all that remains is to check whether this foliation admits an equipartition of null mean curvature.
A routine calculation shows that the null mean curvature (unfortunately, also denoted by $\theta$) is given by
$$\theta= g(\nabla_s\; l, \;s)=g(\nabla_{\alpha^\sharp}\; l, \;\alpha^\sharp) = \frac{r}{\rho^2}\;.$$
All the requirements of theorem \ref{anotherv} have been satisfied, so already a solution is guaranteed. It is easy to write the explicit solution in the foliation adapted chart in the form given by eq.(\ref{finalformF}). To this end, define vector fields
$$X_1 = \partial_{r^\star},\; X_2= \sin (\theta) \sqrt{\rho^2} \;s\;,$$
$$X_3= \partial_{\varphi^\star}\;,\;{\rm and}\; X_4=\partial_\theta + 2t^\star \cot \theta \;\partial_{t^\star}\;.$$ 
It is easily verified that $[X_i, X_j]=0$ for $i,j=1,\dots,4\;.$ Therefore, there exists a foliation (determined by the distribution spanned by $l$ and $s$) adapted coordinate system $(x^1, \dots, x^4)$ such that
$$\frac{\partial}{\partial x^i}=X_i\;.$$
The coordinate $1$-forms transforms as
$$ dx^1= dr, \;dx^2=\frac{1}{a \sin^2 \theta} \big[dt^\star -2t^\star \cot\theta \;d\theta\big]\;,$$
$$dx^3=-dx^2 +d\varphi^\star\;,\;\; {\rm and}\;\; dx^4=d\theta\;.$$
In eq.(\ref{finalformF}),
$u$ does not depend on $x^1$ and $x^2$. This condition is encoded in the statement $X_1 (u)=0=X_2(u)$, and in our case reduces to the conditions
$$X_1(u)= \partial_{r^\star} u =0\;,$$
and
$$X_2 ( u) = 0 = a \sin^2(\theta)\; \partial_{t^\star} u +  \partial_{\varphi^\star} u\;.$$
I.e., we can write the null field as
\begin{equation}
   F_{ls} = \frac{u(x^3, \theta)}{a \sin^2 \theta}\;(-dt^\star + a \sin^2 \theta \; d\varphi^\star) \wedge d\theta\;. 
   \label{Flsv1}
\end{equation}
Note that
$$l^\flat =-dt^\star + a \sin^2 \theta \; d\varphi^\star\;, \;{\rm and}\;\; \alpha = \sqrt{\rho^2}\; d\theta\;.$$
Here
$$
  \left(
           \begin{array}{c}
             \alpha \\
             l^\flat \\
           \end{array}
         \right)=\left(
    \begin{array}{cc}
      0 & \sqrt{\rho^2} \\
      a \sin^2\theta & -2t\cot\theta  \\
    \end{array}
  \right)\left(
           \begin{array}{c}
             dx^3 \\
             dx^4 \\
           \end{array}
         \right)\;, 
$$
and so
$$
  \kappa=-(a \sin^2\theta\sqrt{\rho^2})^{-1}\;.
  $$
Therefore, from eq.(\ref{mixedF})
\begin{equation}
  F_{ls}= -\frac{u(x^3, \theta)\;( \rho^2)^{3/2}}{a \sin^2 \theta }\; \alpha \wedge l^\flat\;.  
   \label{Flsv2}
\end{equation}
Clearly the substitution  $-u/  a \sin^2\theta \rightarrow u\;,$ allows us to write the solution in a simpler way, but we will refrain from doing so in order to freely use the expressions previously derived in this paper. 
To compute the current density vector, we first trivially observe that
$$ds^\flat (l, \alpha^\sharp) = \alpha^\nu l(s_\nu) - l^\nu \alpha^\sharp(s_\nu)=0$$
since when $\alpha^\nu \neq 0$ or $l^\nu \neq 0$ we have that $s_\nu =0$. Therefore, from theorem \ref{currentcomp} we must have that $j$ is along $l$. To compute the component of $j$ along $l$ we separately evaluate the remaining terms in eq.(\ref{jfinal}). 
$$\alpha^\sharp(u \cdot \kappa)= \frac{1}{\sqrt{\rho^2}}\;\partial_\theta \left(\frac{-u}{a \sin^2 \theta \sqrt{\rho^2}}\right)$$
$$=\frac{-u_{,\theta}\; \sin\theta\; \rho^2+2u\cos\theta\; \rho^2-a^2 u\; \sin^2\theta  \cos\theta}{a \sin^3 \theta \rho^4}\;,$$
and
$$ ds^\flat(\alpha^\sharp, s)= \frac{1}{\sqrt{\rho^2}}\;\big[s^t \;\partial_\theta s_t + s^\varphi\; \partial_\theta s_\varphi\big]$$
$$=\frac{\cos\theta\;(r^2+a^2)}{\sin \theta\; (\rho^2)^{3/2}}\;.$$ Putting all the terms together in eq.(\ref{jfinal}), we find that, here
$$j=\frac{\sin\theta\;u_{,\theta}-u \cos\theta}{a \sin^3 \theta\;\rho^2} \;\partial_{r^\star}\;,$$
which can indeed be verified by a direct computation using eq.(\ref{Flsv1}), and (\ref{inhomMaxform}).
\subsection*{The Largest Class of Solutions Generated by $\partial_{r^\star}$}
As in the FLRW case, here too we see that $l=\partial_{r^\star}$ admits a uniform equipartition of null mean curvature:
$$g(\nabla_{\alpha^\sharp} \;\partial_{r^\star},s ) = \frac{a \cos\theta}{\rho^2} = -g(\nabla_s \;\partial_{r^\star}, \alpha^\sharp)\;.$$
Also, when
$$\hat s = A\; s+ B\;\alpha^\sharp\;,$$
we have that $\partial_{r^\star}$ and $\hat s$ form an involutive distribution
if and only if $A=A(t^\star, \theta, \varphi^\star)$ and $B=B(t^\star, \theta, \varphi^\star)$.
Therefore, theorem \ref{hatsoltheo} gives us that for any choice of such functions $A$ and $B$ such that $A^2+B^2 =1$, we have will have a new class of null force-free solutions, $\hat F_{l\hat s}$ of the form given by eq. (\ref{hatsol}), wherein
$$\ker \hat F_{l\hat s} =\; {\rm span}\;\;\{\partial_{r^\star}, \;A\; s+ B\;\alpha^\sharp\}\;.$$
Also,
$$d\hat s^\flat (l, \hat \alpha^\sharp)=\hat \alpha^t \;\partial_{r^\star} (\hat s_t) + \hat \alpha^\theta \;\partial_{r^\star} (\hat s_\theta) + \hat \alpha^\varphi \;\partial_{r^\star} (\hat s_\varphi)$$
$$=\frac{AB}{\sqrt{\rho^2}} \;\left(a^2 \sin^2 \theta\; \partial_{r^\star}\left(\frac{1}{\sqrt{\rho^2}}\right)-\partial_{r^\star}\left(\frac{(r^2 + a^2)}{\sqrt{\rho^2}}\right)\right)$$
$$+\frac{AB}{\sqrt{\rho^2}}\;\partial_{r^\star}\sqrt{\rho^2}=0\;.$$
Therefore, from theorem \ref{currentcomp}, regardless of choice of functions $A$ and $B$ the current density vector for $\hat F_{l\hat s}$  is along $\partial_{r^\star}$.
From eq.(\ref{hatsol}) and (\ref{lrot}) we get that
$$\hat F_{l\hat s} = (u \cdot \kappa)\; (-B \;s^\flat + A\; \alpha) \wedge l^\flat$$
$$= (u \cdot \kappa)\; \sqrt{\rho^2}\;( A\; d\theta-B \;\sin \theta\; d\varphi^\star) \wedge l^\flat\;.$$
Important point: $(u \cdot \kappa)\; \sqrt{\rho^2}$ must be $r^\star$ independent so that $d\hat F_{l\hat s}=0$. Subsuming this term into $A$ and $B$ we finally get that
\begin{equation}
 \hat F_{l\hat s}= \big( A \;d\theta-B \sin \theta\; d\varphi^\star\big) \wedge l^\flat\;.   \label{genloutkerrsol}
\end{equation}
This is exactly in the form given in \cite{GT14}, and so we do not analyze the solution any further. Suffice it to say that the solution given by eq.(\ref{genloutkerrsol}) is the largest class of solutions possible when $\partial_{r^\star}$ is in the kernel of a null field. We are able to make this assertion based on the structural aspects of the theory of null, force-free fields rather than by relying on the method of computational exhaustion.

\section{Conclusion}
In this work, we have stated the precise geometric conditions  under which a null force-free electromagnetic field may exist in curved spacetime. It is shown that the integral submanifolds of the kernel of the field tensor must contain a null geodesic. Further, these submanifolds alone must determine the expansion scalar of the null geodesic congruence. As a result, when null foliations admitting the equipartition of null curvature exists, then and then alone will spacetime admit a class of local null force-free solutions.

From a theoretical point of view, there is still an open question regarding how one may patch local solutions into meaningful global solutions. However, in astrophysical settings, this is not a large concern. Of chief importance in black hole astrophysics is the Kerr/Schwarzschild metric. In both these cases, we have a horizon penetrating coordinate system that is also valid everywhere in the external geometry. Therefore, once we establish a null field sheet foliation, we can easily write down the form of the field, as given by eq.(\ref{mixedF}), using a single horizon penetrating coordinate system. The only remaining variable, given by $u \cdot \kappa$, can be obtained by imposing the force-free Maxwell's equation. The latter part is a simple computation since we are guaranteed a solution. 

The formalism we have developed is not with the intention of making computations easier, it is rather to promote an understanding of the structural aspects of force-free, null fields. However, it might offer a new path in obtaining both analytical and numerical solutions. Keeping astrophysical relevance in mind, a possible new path in the case of the Kerr metric can be sketched. In this case, we already have an exact expression for any null geodesic tangent vector. The tangent vector $l$ is uniquely described by fixing the energy $\epsilon$, angular momentum $L$, and the Carter constant $K$ of the geodesic. In order to obtain the most general null geodesic congruence in Kerr geometry, we must elevate these integral constants to functions in spacetime such that
$$l(\epsilon)=l(L)=l(K)=0\;.$$
The above equation will ensure that each geodesic of the congruence will have a constant value for $\epsilon, L$, and $K$. For any such choice of a null congruence, all that remains, is to obtain $l^\perp$ and look for foliations admitting an equipartition of null mean curvature with respect to the null congruence $l$. If the path described above is not any easier than the existing techniques, it certainly does offer a new one. This methodology is generic and does not rely on the details of the Kerr metric as was made evident by constructing a specific example in an FLRW spacetime.
\bibliography{bibliography}

\providecommand{\noopsort}[1]{}\providecommand{\singleletter}[1]{#1}%
\begin{thebibliography}{14}%
\makeatletter
\providecommand \@ifxundefined [1]{%
 \@ifx{#1\undefined}
}%
\providecommand \@ifnum [1]{%
 \ifnum #1\expandafter \@firstoftwo
 \else \expandafter \@secondoftwo
 \fi
}%
\providecommand \@ifx [1]{%
 \ifx #1\expandafter \@firstoftwo
 \else \expandafter \@secondoftwo
 \fi
}%
\providecommand \natexlab [1]{#1}%
\providecommand \enquote  [1]{``#1''}%
\providecommand \bibnamefont  [1]{#1}%
\providecommand \bibfnamefont [1]{#1}%
\providecommand \citenamefont [1]{#1}%
\providecommand \href@noop [0]{\@secondoftwo}%
\providecommand \href [0]{\begingroup \@sanitize@url \@href}%
\providecommand \@href[1]{\@@startlink{#1}\@@href}%
\providecommand \@@href[1]{\endgroup#1\@@endlink}%
\providecommand \@sanitize@url [0]{\catcode `\\12\catcode `\$12\catcode
  `\&12\catcode `\#12\catcode `\^12\catcode `\_12\catcode `\%12\relax}%
\providecommand \@@startlink[1]{}%
\providecommand \@@endlink[0]{}%
\providecommand \url  [0]{\begingroup\@sanitize@url \@url }%
\providecommand \@url [1]{\endgroup\@href {#1}{\urlprefix }}%
\providecommand \urlprefix  [0]{URL }%
\providecommand \Eprint [0]{\href }%
\providecommand \doibase [0]{http://dx.doi.org/}%
\providecommand \selectlanguage [0]{\@gobble}%
\providecommand \bibinfo  [0]{\@secondoftwo}%
\providecommand \bibfield  [0]{\@secondoftwo}%
\providecommand \translation [1]{[#1]}%
\providecommand \BibitemOpen [0]{}%
\providecommand \bibitemStop [0]{}%
\providecommand \bibitemNoStop [0]{.\EOS\space}%
\providecommand \EOS [0]{\spacefactor3000\relax}%
\providecommand \BibitemShut  [1]{\csname bibitem#1\endcsname}%
\let\auto@bib@innerbib\@empty
\bibitem [{\citenamefont {Menon}(2020)}]{Menon_FF20}%
  \BibitemOpen
  \bibfield  {author} {\bibinfo {author} {\bibfnamefont {G.}~\bibnamefont
  {Menon}},\ }\href@noop {} {\enquote {\bibinfo {title} {Force-free
  electrodynamics and foliations in an arbitrary spacetime},}\ } (\bibinfo
  {year} {2020}),\ \Eprint {http://arxiv.org/abs/2003.01024} {arXiv:2003.01024
  [gr-qc]} \BibitemShut {NoStop}%
\bibitem [{\citenamefont {{Blandford}}\ and\ \citenamefont
  {{Znajek}}(1977)}]{BZ77}%
  \BibitemOpen
  \bibfield  {author} {\bibinfo {author} {\bibfnamefont {R.~D.}\ \bibnamefont
  {{Blandford}}}\ and\ \bibinfo {author} {\bibfnamefont {R.~L.}\ \bibnamefont
  {{Znajek}}},\ }\href
  {http://mnras.oxfordjournals.org/content/445/3/2500.abstract} {\bibfield
  {journal} {\bibinfo  {journal} {Monthly Notices of the Royal Astronomical
  Society}\ }\textbf {\bibinfo {volume} {179}},\ \bibinfo {pages} {433}
  (\bibinfo {year} {1977})}\BibitemShut {NoStop}%
\bibitem [{\citenamefont {Komissarov}(2004)}]{K04}%
  \BibitemOpen
  \bibfield  {author} {\bibinfo {author} {\bibfnamefont {S.~S.}\ \bibnamefont
  {Komissarov}},\ }\href
  {http://mnras.oxfordjournals.org/content/445/3/2500.abstract} {\bibfield
  {journal} {\bibinfo  {journal} {Monthly Notices of the Royal Astronomical
  Society}\ }\textbf {\bibinfo {volume} {350}},\ \bibinfo {pages} {427}
  (\bibinfo {year} {2004})}\BibitemShut {NoStop}%
\bibitem [{\citenamefont {Ruiz}\ \emph {et~al.}(2020)\citenamefont {Ruiz},
  \citenamefont {Tsokaros},\ and\ \citenamefont {Shapiro}}]{RMTASS}%
  \BibitemOpen
  \bibfield  {author} {\bibinfo {author} {\bibfnamefont {M.}~\bibnamefont
  {Ruiz}}, \bibinfo {author} {\bibfnamefont {A.}~\bibnamefont {Tsokaros}}, \
  and\ \bibinfo {author} {\bibfnamefont {S.~L.}\ \bibnamefont {Shapiro}},\
  }\href {\doibase 10.1103/PhysRevD.101.064042} {\bibfield  {journal} {\bibinfo
   {journal} {Phys. Rev. D}\ }\textbf {\bibinfo {volume} {101}},\ \bibinfo
  {pages} {064042} (\bibinfo {year} {2020})}\BibitemShut {NoStop}%
\bibitem [{\citenamefont {Koide}\ and\ \citenamefont
  {Imamura}(2019)}]{Koide_2019}%
  \BibitemOpen
  \bibfield  {author} {\bibinfo {author} {\bibfnamefont {S.}~\bibnamefont
  {Koide}}\ and\ \bibinfo {author} {\bibfnamefont {T.}~\bibnamefont
  {Imamura}},\ }\href {\doibase 10.3847/1538-4357/ab2e02} {\bibfield  {journal}
  {\bibinfo  {journal} {The Astrophysical Journal}\ }\textbf {\bibinfo {volume}
  {881}},\ \bibinfo {pages} {91} (\bibinfo {year} {2019})}\BibitemShut
  {NoStop}%
\bibitem [{\citenamefont {Qian}\ \emph {et~al.}(2018)\citenamefont {Qian},
  \citenamefont {Fendt},\ and\ \citenamefont {Vourellis}}]{Qian_2018}%
  \BibitemOpen
  \bibfield  {author} {\bibinfo {author} {\bibfnamefont {Q.}~\bibnamefont
  {Qian}}, \bibinfo {author} {\bibfnamefont {C.}~\bibnamefont {Fendt}}, \ and\
  \bibinfo {author} {\bibfnamefont {C.}~\bibnamefont {Vourellis}},\ }\href
  {\doibase 10.3847/1538-4357/aabd36} {\bibfield  {journal} {\bibinfo
  {journal} {The Astrophysical Journal}\ }\textbf {\bibinfo {volume} {859}},\
  \bibinfo {pages} {28} (\bibinfo {year} {2018})}\BibitemShut {NoStop}%
\bibitem [{\citenamefont {{Carter}}(1979)}]{Carter79}%
  \BibitemOpen
  \bibfield  {author} {\bibinfo {author} {\bibfnamefont {B.}~\bibnamefont
  {{Carter}}},\ }\href@noop {} {\emph {\bibinfo {title} {General Relativity; An
  Einstein Centenary Survey}}},\ edited by\ \bibinfo {editor} {\bibfnamefont
  {S.}~\bibnamefont {Hawking}}\ and\ \bibinfo {editor} {\bibfnamefont
  {W.}~\bibnamefont {Israel}}\ (\bibinfo  {publisher} {Cambridge University
  Press},\ \bibinfo {address} {Cambridge},\ \bibinfo {year} {1979})\BibitemShut
  {NoStop}%
\bibitem [{\citenamefont {Uchida}(1997{\natexlab{a}})}]{Uchida1}%
  \BibitemOpen
  \bibfield  {author} {\bibinfo {author} {\bibfnamefont {T.}~\bibnamefont
  {Uchida}},\ }\href {\doibase 10.1103/PhysRevE.56.2181} {\bibfield  {journal}
  {\bibinfo  {journal} {Phys. Rev. E}\ }\textbf {\bibinfo {volume} {56}},\
  \bibinfo {pages} {2181} (\bibinfo {year} {1997}{\natexlab{a}})}\BibitemShut
  {NoStop}%
\bibitem [{\citenamefont {Uchida}(1997{\natexlab{b}})}]{Uchida2}%
  \BibitemOpen
  \bibfield  {author} {\bibinfo {author} {\bibfnamefont {T.}~\bibnamefont
  {Uchida}},\ }\href {\doibase 10.1103/PhysRevE.56.2198} {\bibfield  {journal}
  {\bibinfo  {journal} {Phys. Rev. E}\ }\textbf {\bibinfo {volume} {56}},\
  \bibinfo {pages} {2198} (\bibinfo {year} {1997}{\natexlab{b}})}\BibitemShut
  {NoStop}%
\bibitem [{\citenamefont {Menon}\ and\ \citenamefont {Dermer}(2007)}]{MD07}%
  \BibitemOpen
  \bibfield  {author} {\bibinfo {author} {\bibfnamefont {G.}~\bibnamefont
  {Menon}}\ and\ \bibinfo {author} {\bibfnamefont {C.~D.}\ \bibnamefont
  {Dermer}},\ }\href {\doibase 10.1007/s10714-007-0418-2} {\bibfield  {journal}
  {\bibinfo  {journal} {General Relativity and Gravitation}\ }\textbf {\bibinfo
  {volume} {39}},\ \bibinfo {pages} {785} (\bibinfo {year} {2007})}\BibitemShut
  {NoStop}%
\bibitem [{\citenamefont {Menon}(2015)}]{Menon15}%
  \BibitemOpen
  \bibfield  {author} {\bibinfo {author} {\bibfnamefont {G.}~\bibnamefont
  {Menon}},\ }\href {\doibase 10.1103/PhysRevD.92.024054} {\bibfield  {journal}
  {\bibinfo  {journal} {Phys. Rev. D}\ }\textbf {\bibinfo {volume} {92}},\
  \bibinfo {pages} {024054} (\bibinfo {year} {2015})}\BibitemShut {NoStop}%
\bibitem [{\citenamefont {Brennan}\ \emph {et~al.}(2013)\citenamefont
  {Brennan}, \citenamefont {Gralla},\ and\ \citenamefont {Jacobson}}]{BGJ13}%
  \BibitemOpen
  \bibfield  {author} {\bibinfo {author} {\bibfnamefont {T.~D.}\ \bibnamefont
  {Brennan}}, \bibinfo {author} {\bibfnamefont {S.~E.}\ \bibnamefont {Gralla}},
  \ and\ \bibinfo {author} {\bibfnamefont {T.}~\bibnamefont {Jacobson}},\
  }\href {http://stacks.iop.org/0264-9381/30/i=19/a=195012} {\bibfield
  {journal} {\bibinfo  {journal} {Classical and Quantum Gravity}\ }\textbf
  {\bibinfo {volume} {30}},\ \bibinfo {pages} {195012} (\bibinfo {year}
  {2013})}\BibitemShut {NoStop}%
\bibitem [{\citenamefont {Gralla}\ and\ \citenamefont {Jacobson}(2014)}]{GT14}%
  \BibitemOpen
  \bibfield  {author} {\bibinfo {author} {\bibfnamefont {S.~E.}\ \bibnamefont
  {Gralla}}\ and\ \bibinfo {author} {\bibfnamefont {T.}~\bibnamefont
  {Jacobson}},\ }\href {\doibase 10.1093/mnras/stu1690} {\bibfield  {journal}
  {\bibinfo  {journal} {Monthly Notices of the Royal Astronomical Society}\
  }\textbf {\bibinfo {volume} {445}},\ \bibinfo {pages} {2500} (\bibinfo {year}
  {2014})}\BibitemShut {NoStop}%
\bibitem [{\citenamefont {Lee}(2013)}]{JLee13}%
  \BibitemOpen
  \bibfield  {author} {\bibinfo {author} {\bibfnamefont {J.~M.}\ \bibnamefont
  {Lee}},\ }\href@noop {} {\emph {\bibinfo {title} {Introduction to smooth
  manifolds}}},\ \bibinfo {edition} {2nd}\ ed.,\ \bibinfo {series} {Graduate
  Texts in Mathematics}, Vol.\ \bibinfo {volume} {218}\ (\bibinfo  {publisher}
  {Springer, New York},\ \bibinfo {year} {2013})\ pp.\ \bibinfo {pages}
  {xvi+708}\BibitemShut {NoStop}%
\end{thebibliography}%

\end{document}